\newcolumntype{C}[1]{>{\Centering}m{#1}}
\newtheorem{prop}{Proposition}
\title{Flashback: Enhancing Proposer-Builder Design with Future-Block Auctions in Proof-of-Stake Ethereum} 
\author{Yifan Mao}{Computer Science and Engineering, the Ohio State University, United States}{mao.360@osu.edu}{}{}
\author{Mengya Zhang}{Computer Science and Engineering, the Ohio State University, United States}{zhang.9407@osu.edu}{}{}
\author{Shaileshh Bojja Venkatakrishnan}{Computer Science and Engineering, the Ohio State University, United States}{bojjavenkatakrishnan.2@osu.edu}{}{}
\author{Zhiqiang Lin}{Computer Science and Engineering, the Ohio State University, United States}{zlin@cse.ohio-state.edu}{}{}
\keywords{Maximal extractable value, game theory, proof-of-stake blockchain} 
\begin{document}
\nolinenumbers
\maketitle

\begin{abstract}
Maximal extractable value (MEV)—in which block proposers unethically gain profits by manipulating the order in which transactions are included within a block—is a key challenge facing blockchains such as Ethereum today. 
Left unchecked, MEV can lead to a centralization of stake distribution thereby ultimately compromising the security of blockchain consensus. 
To preserve proposer decentralization (and hence security) of the blockchain, Ethereum has advocated for a proposer-builder separation (PBS) in which the functionality of transaction ordering is separated from proposers and assigned to separate entities called builders. 
Builders accept transaction bundles from searchers, who compete to find the most profitable bundles.
Builders then bid completed blocks to proposers, who accept the most profitable blocks for publication. 
The auction mechanisms used between searchers, builders and proposers are crucial to the overall health of the blockchain. 
In this paper, we consider PBS design in Ethereum as a game between searchers, builders and proposers. 
A key novelty in our design is the inclusion of future block proposers—as all proposers of an epoch are decided ahead of time in proof-of-stake (PoS) Ethereum—within the game model. 
Our analysis shows the existence of alternative auction mechanisms that result in a better (more profitable) equilibrium to players compared to state-of-the-art. 
Experimental evaluations based on synthetic and real-world data traces corroborate the analysis. 
Our results highlight that a rethinking of auction mechanism designs is necessary in PoS Ethereum to prevent disruption. 
\end{abstract}

\section{Introduction}
Blockchains have revolutionized the concept of trustless, decentralized applications (dapps) operating globally over the Internet. 
Despite fluctuations in sentiment, blockchains have evolved into a trillion-dollar market, solidifying their position as one of the most impactful technologies of recent times.
Prominent blockchains like Ethereum, Solana, Cardano, among others, feature smart contract capabilities, offering a versatile and potent platform for developing dapps across diverse domains. 
These domains span healthcare, gaming, social networks, digital assets, and beyond.
One particularly crucial domain for dapp development is decentralized finance (DeFi), encompassing financial services such as exchanges, lending and borrowing platforms, stablecoins, insurance, prediction markets, and more, all implemented as smart contracts~\cite{defiusecases}. 
DeFi services execute automatically through transparently implemented smart contract programs, eliminating the need for involvement from any third-party financial service provider, and have gained significant popularity.
Currently, DeFi on Ethereum alone accounts for hundreds of billions of dollars in trade volume~\cite{defimarketcommentary}.

An unintended consequence of proliferation of DeFi services is an increase in transaction reordering attacks on the blockchain~\cite{daian2020flash}.
Block proposers in a blockchain have freedom in ordering transactions as they like before publishing a block. 
Furthermore, the collection of pending transactions
remains visible to the public eye.
By observing an impending DeFi transaction,
an attacker can issue its own transaction with higher transaction fees thereby incentivizing a block proposer to confirm the attacking transaction before the victim transaction and gain unfair rewards. 
This practice of front-running is tightly regulated (illegal in many cases) in the traditional financial system. 
In blockchains, front-running attacks not only diminish the payouts for victim users but can seriously compromise the security of the consensus protocol due to the vast amounts of profits attackers gain in such attacks~\cite{heimbach2022sok}.

To mitigate the negative effects of transaction reordering attacks, in Ethereum the task of arranging transactions within a block is delegated to third-parties called block builders. 
Block proposers receive pre-arranged blocks from builders for publication, 
while end-users (also called searchers) submit transactions or transaction bundles directly to builder(s) of their choosing including a fee with each transaction or bundle. 
During a block publishing slot, a proposer receives block bids from various builders from which the proposer accepts the most profitable block. 
Fees in transactions are shared between the block proposer and the builder who submitted the block. 
Searchers analyze pending high-value transactions and create new transactions and/or transaction bundles to potentially target these high-value transactions for attacks.
Builders thus compete to sell the greatest number of blocks to block proposers. 
Investing in better computational hardware, improving trust relationship with transaction providers and subsidizing blocks are some of the strategies builders use to best compete~\cite{mevlandscape,builddomandseardep,mevgalaxy}.

Builders in Ethereum today only advertise their blocks to the proposer of the next upcoming slot.
This is a remnant of a past practice, as Ethereum used proof-of-work consensus  till fall 2022 in which miners for future blocks are not known ahead of time. 
Time in proof-of-stake Ethereum is divided into 12 second slots and 32 slot epochs, with one block being produced per slot. 
At the beginning of an epoch proposers for all 32 slots in the epoch are decided via a pseudo-random sampling algorithm~\cite{eth2book}. 
Unfortunately, even post-merge (when Ethereum transitioned to proof-of-stake) the bidding policies of builders have not significantly changed. 
In this paper we consider the problem of an efficient searcher-builder-proposer auction design when the proposers of not only the upcoming slot, but also future slots are known in advance to builders. 
When a proposer is selected for a time slot in the future, this knowledge is known only to the selected proposer at first.
If there is an adequate incentive for the proposer to reveal this information to a block builder, a rational proposer would choose to do so.

A proposer assigned to a future slot may wish to `reserve' high-value transactions accumulated by a builder, particularly if the likelihood of receiving such transactions during its future slot is low. 
We formulate the auction design as a game between three players: users (searchers), builders and proposers. 
Under the model where future block proposers are known to builders, we present Flashback---a novel block building and auction mechanism---and show the existence of an equilibrium where (1) a Flashback builder receives a greater reward compared to a default builder that does not consider future block proposers; (2) block proposers receive a greater share of blocks from a Flashback builder compared to the default builder, and (3) end-users submit a greater number of transactions to a Flashback builder compared to the default builder. 

Flashback presents a basic auction mechanism design to highlight the importance of future block proposers in PBS equilibria. 
Employing advanced online optimization strategies could potentially offer improved outcomes, leaving room for further exploration in future research.
Considering our findings, we argue that the current Ethereum proposer-builder ecosystem is not at equilibrium and improved builder policies are possible. 
Multi-block auctions in which blocks for consecutive slots are collectively auctioned are practiced by some builders, but they are applicable only to instances where the same user is elected as a proposer for multiple consecutive blocks~\cite{jensen2023multi}. 
In summary, the contributions of this paper are: 
\begin{itemize}
\item We rethink optimal builder bidding strategy in proof-of-stake Ethereum leveraging knowledge of proposers for future time slots.
\item Under a game theoretic model, we show an auction mechanism that at equilibrium outperforms today's mechanisms.
\item Simulations using real-world transaction data, show our proposed builder obtaining 20\% higher rewards compared to today's builders.  
\end{itemize}

\section{Background}
\subsection{Ethereum}

\newcommand{\bheading}[1]{{\vspace{2pt}\noindent{\textbf{#1}}}}
\newcommand{\iheading}[1]{{\vspace{2pt}\noindent{\textit{#1}}}} 

Ethereum~\cite{Buterin2013} is a decentralized, public blockchain platform that is currently the second-largest after Bitcoin~\cite{nakamoto2008bitcoin}. 

\bheading{Proof-of-Stake (PoS).}
Ethereum has switched to Proof-of-Stake (PoS) since September 15, 2022 from its previous consensus mechanism, Proof-of-Work (PoW)~\cite{posdoc}.
In PoS Ethereum, time is organized into epochs 
comprising 32 slots. 
Each slot has one block proposer. 
At the onset of each epoch, proposers are assigned for all 32 slots in the epoch through shuffling. 
A key reason for assigning proposers in advance is that an advance notification 
facilitates proposers in participating in the correct peer-to-peer network subnets and enables them to prepare for assigned tasks, such as attestation. 
However, this necessity for advance notification also poses a potential security challenge, as it compromises the inherent unpredictability that is vital to Ethereum's consensus mechanism~\cite{unpredictability}. 
\subsection{MEV, Builders, and Proposers}
\label{sec:mevbuildersandproposers}

\bheading{MEV.} 
Maximal extractable value (MEV) refers to the profits unfairly gained by attackers by executing transaction reordering attacks~\cite{daian2020flash}.  
Token swaps, arbitrage and loan liquidations are some examples of transactions from which MEV can be extracted. 
E.g., in a token swap a user wishes to swap one ERC-20 token $X$ for another token $Y$ at a decentralized exchange for a competitive price. 
By observing the user's transaction in the mempool (set of outstanding transactions), an attacker can issue its own transaction purchasing token $Y$ that frontruns the user's transaction, and later sell those tokens for a profit. 
Similarly, in an arbitrage a user exploits price difference of an asset at different exchanges and performs a sequence of buy-sell operations to gain a profit. 
However, an adversary observing the user's transaction in the mempool can copy the arbitrage and issue its own transaction that frontruns the user's transaction, thus gaining the profits for itself. 
It is reported that the monthly MEV collected on lending platforms and decentralized exchanges exceeds \$100M~\cite{heimbach2022sok}. 


\bheading{Proposer-builder separation.}
To mitigate the negative effects of MEV on proposer centralization, Ethereum has favored a proposer-builder separation architecture~\cite{heimbach2023ethereum, wahrstatter2023time}. 
We explain the pipeline of how transactions are confirmed in Ethereum's proposer-builder architecture below. 

\begin{enumerate}
\item Users generate new transactions including transaction fees indicative of the priority they desire for their transactions (high fees ensure quicker confirmation of transactions). 
These transactions are then broadcast across the network and are publicly visible to all the nodes.
We refer to such transactions as public transactions. 

\smallskip 
Searchers act as adversaries in the network.
They keep searching for profitable victim transactions in the mempool.
When a victim transaction is found, a searcher constructs a transaction bundle including its own transactions and the victim transaction in an appropriate order, and privately sends the bundle to a builder~\cite{wang2022impact, gupta2023centralizing}. 
Note that unlike public transactions, privately sent transactions are visible only to the recipient builder. 
Searchers are also  willing to pay more fees (obtained from their MEV profits) for faster inclusion in the blockchain.

\smallskip 
To avoid attacks from searchers, a user can also send its transactions directly to a builder through a private channel, including an appropriate amount of fees.
The private channel ensures the transactions are only available to the builder to whom they were sent and will only appear in that builder's generated candidate block.
\smallskip 

\begin{figure}[htp]
  \centering
    \includegraphics[width=0.7\linewidth]{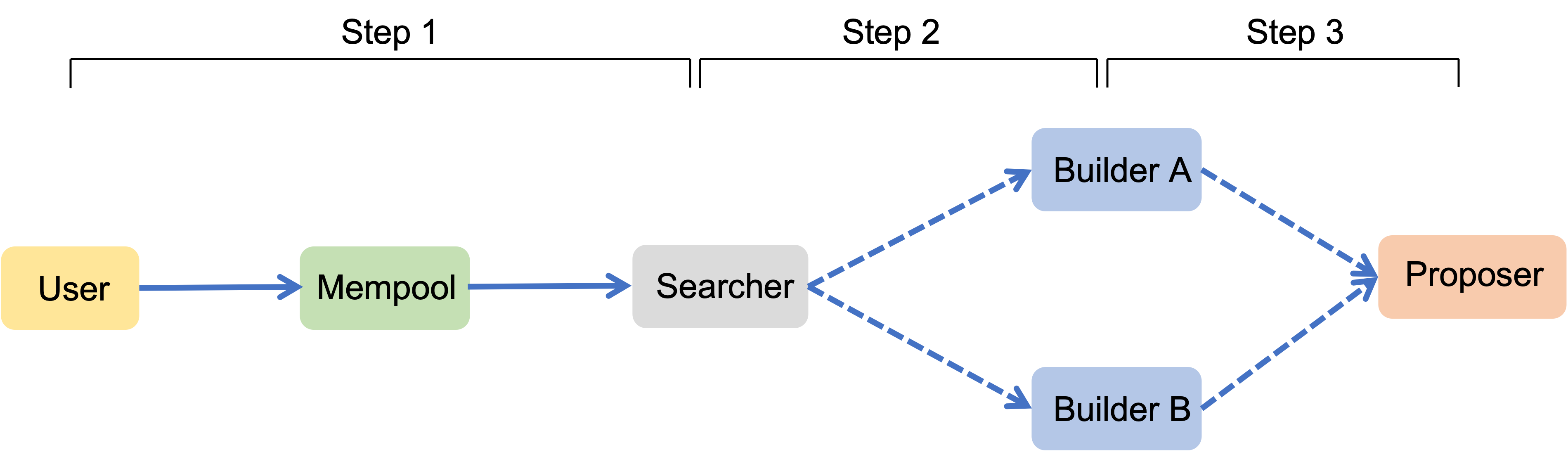}
  \caption{Sequence of interactions between users, searchers, builders, and proposers in a PBS model for each round of block proposal (\S\ref{sec:mevbuildersandproposers}). Dashed edges denote private interactions. Private transactions submitted by users directly to the builders are not shown.} 
  \label{fig:background}
\end{figure}

\item 
Builders receive private transactions from users and transaction bundles from searchers, and package them into a candidate block. 
Part of the fees gained in the block is retained by the builder while most of it is marked for transfer to whichever proposer publishes the block. 
The builder then advertises 
the block to the proposer of the upcoming slot. 
\smallskip 

\item 
Proposers receive candidate blocks from builders, and choose the block generating the most profit for publication. 
Once the block is published, transactions within the block are deemed confirmed, and the proposer and the builder receive their share of fees as specified by the builder. 
Figure~\ref{fig:background} illustrates the process outlined so far. 
\end{enumerate}


\section{Model}
\label{s: model}

\subsection{System Model}
\label{s: netmodel}

\bheading{User model.}
We model our system as a game involving 3 types of entities: users, builders and proposers.
We do not distinguish between users and searchers in our model. 
Time is segmented into discrete rounds,  with a single block proposed by a designated proposer during each round.
We assume there are $n>0$ users.
We consider two builders in the system: the primary and secondary builder. 
The primary builder refers to a builder running Flashback auction, while the secondary builder models today's builders without future block auction capability. 
The words primary and secondary are used only for nomenclature and do not indicate any inherent user or validator preference of one builder over the other. 

Without loss of generality we assume at each round, an independent proposer publishes the block.  
Each round, each user generates a private transaction with a probability of $0<q<1$ (randomness independent across users).
When a user creates a private transaction, its transaction value is randomly sampled from a distribution $p_\mathrm{private}$. 
This value represents the total fees paid by the user for that specific transaction.

When a user generates a private transaction within a round, it randomly selects one of the two builders for submission. 
The probability of sending to the primary builder is $S_\mathrm{primary}/(S_\mathrm{primary} + S_\mathrm{secondary})$, while the probability of choosing the secondary builder is $S_\mathrm{secondary}/(S_\mathrm{primary} + S_\mathrm{secondary})$. 
Here, $S_\mathrm{primary}$ and $S_\mathrm{secondary}$ are scores indicating the performance of the primary and secondary builder respectively. 
The scores reflect various aspects such as the likelihood of a builder getting a block accepted based on past performance. 
The formal definition of the score function will be provided later.
A private transaction is assumed to expire if it remains unincorporated into the blockchain within $\tau>0$ rounds.

Additionally, each round involves the generation of $k$ public transactions, all of which are transmitted to both builders. 
The value of each generated public transaction is randomly sampled from the distribution $p_\mathrm{public}$. Public transactions are considered expired within the same round they are generated.

\bheading{Builder model.}
In each time slot, a builder receives private and public transactions from users, as described previously. 
Additionally, the builder maintains a repository of unexpired transactions that have yet to be confirmed. 
Alongside this, during each slot, the secondary builder (also referred to as the default builder) selects the most profitable $K$ unconfirmed transactions available to it. 
Here, we consider $K$ as the block size. Once the secondary builder has compiled a block for round $t$, it sends the block to the proposer assigned to that particular round. 
Upon successful inclusion of the block, the secondary builder receives a fraction $0< r_2 <1$ of the total transaction fees contained within the block. 
The remaining fraction, $1-r_2$, is received by the proposer who proposes this particular block.
It is important to note that during round $t$, the secondary builder interacts only with the proposer assigned to that round.

In each round, the primary builder, also known as the policy builder, employs a specific policy denoted as $\pi_\mathrm{build}$ to construct the block. 
During round $t$, the primary builder possesses information about identities of all validators from rounds $t$ to $t + c$, where $c > 0$ is a positive integer.
To advertise blocks and submit bids to proposers, the primary builder adheres to a bidding policy denoted as $\pi_\mathrm{bid}$. 
It is important to note that the fraction of transaction fees retained by the primary builder hinges on the interplay between the building policy $\pi_\mathrm{build}$ and the bidding policy $\pi_\mathrm{bid}$.

\bheading{Proposer model.}
We assume that proposers evaluate bids presented by both the primary and secondary builders. 
Each proposer's aim is to accept a block that maximizes its profits.
We use the words proposer and validator interchangeably. 

\bheading{Reward model.}
Users adhere to the policy described above, and thus, we do not assign any specific rewards to the users. 
For a builder, its reward is represented by the average fees obtained from the blocks it constructs that are confirmed. 
Similarly, a proposer's reward is determined by the amount of fees earned from the blocks it proposes. 

 \bheading{Score function.}
Users evaluate builders using a scoring function that considers several factors: fee charged by the builder, average waiting time for processing private transactions, and the rate of failure (expiry) of private transactions. 
Average waiting time signifies the duration from when transactions are received by the builder until their processing.
The failure rate refers to the proportion of timed-out transactions in a builder's private mempool that have expired.
While a user may know the above parameters for transactions it has generated so far, it may not know the parameters for transactions generated by other users. 
Therefore, we assume that users rate and share knowledge of their experience with different builders through public forums. 
The overall score of a builder $b$ (for $b \in$ \{primary, secondary\}) is computed as 
\begin{align}
S_b = w_r*F_r + w_d*F_d + w_m*F_m,
\end{align}
where 
\begin{enumerate}
\item $F_r$ is a moving average of the total reward earned by the builder, averaged over the most recent $W$ (we use $W=3200$ in our experiments) blocks. 
Published blocks are public and their contents accessible to all nodes in the network. 
Out of the fees paid by a user to a builder, typically the builder retains a small portion of the fees while the bulk of the fees is allocated to the proposer that publishes the block. 
The amount of fees a builder and proposer earned in a block can be readily computed from the block contents~\cite{etherscan}. 
\item $F_d$ is a moving average of the time-until-expiry of transactions submitted to the builder experience before getting included in a published block.
Only unexpired transactions are considered in the computation of $F_d$.
\item $F_m$ is a moving average of the number of transactions submitted to the builder which expired and failed to get included in a block.   
\item $w_r > 0, w_d > 0$ and $w_m < 0$ are weighting factors. 
A higher score for the builder implies users are more likely to choose the builder for sending their private transactions. 
\end{enumerate}

In general, the processing delay experienced by a transaction, or its failure status, is not public knowledge and may not be available to all users. 
However, we assume users share and rate their personal experiences of using different builders on various public forums from which estimates for $F_d$ and $F_m$ may be derived. 



\subsection{Problem statement}

We aim to design a block building policy $\pi_\mathrm{build}$ and a bidding strategy $\pi_\mathrm{bid}$ for the primary builder such that, the total rewards earned by the primary builder surpasses that of the secondary builder at equilibrium.
Our objective in this work is to only show the existence of an equilibrium where the primary builder wins (vs. the secondary builder). 
We leave the problem of determining the optimal building and bidding policies for maximizing the primary builder's reward for future work. 

\section{Flashback Builder Design}
\label{s: builderdesign}

We propose a novel block building and bidding strategy for the primary builder, and call our builder design Flashback.\footnote{Henceforth, we use the terms Flashback builder and primary builder interchangeably.}
A Flashback builder selects unconfirmed high-value private transactions available to it at round $t$, and advertises those to the proposer assigned for round $t+1$. 
If the proposer for round $t+1$ desires the advertised transactions, the Flashback builder reserves those transactions for the $(t+1)$-th proposer and includes the reserved transactions in the block built at round $t+1$. 
Transactions reserved this way for future block proposers are sold at a much higher price than normal. 
That is, if $r_2$ is the normal amount of reward earned by a builder per unit transaction fee, a Flashback builder charges $r_1 > r_2$ amount of reward per unit transaction fee for reserved transactions. 
Despite charging a higher-than-normal rate for high-value transactions, future proposers can still be incentivized to reserve transactions from the Flashback builder due to the rarity of high-valued private transactions. 
In the following, we describe the detailed mechanics of a Flashback builder. 
In~\S\ref{s:analysis} we formally analyze our proposed design. 

A Flashback builder at round $t$ considers only the proposer at round $t+1$ for advertising transactions and taking reservations.     
A more general design could consider the Flashback builder at round $t$ interacting with proposers assigned for rounds $t+1, t+2 \ldots, t+c$. 
Such a design can result in an even better reward compared to our present proposal, but arguably is also more complex. 
In fact, even considering only the proposer at round $t+1$, the space of possible block building and transaction auctioning policies is vast. 
Our primary intention in this work is to show the existence of builder designs resulting in an improved equilibrium rewards under the future proposer auction model. 
We hope our work can inspire follow-up research (and implementations) on optimal auction and block construction policies.

\subsection{Private transaction bidding}
\label{s: auction}
\begin{figure}[!t]
\centering
\includegraphics[width=0.75\textwidth]{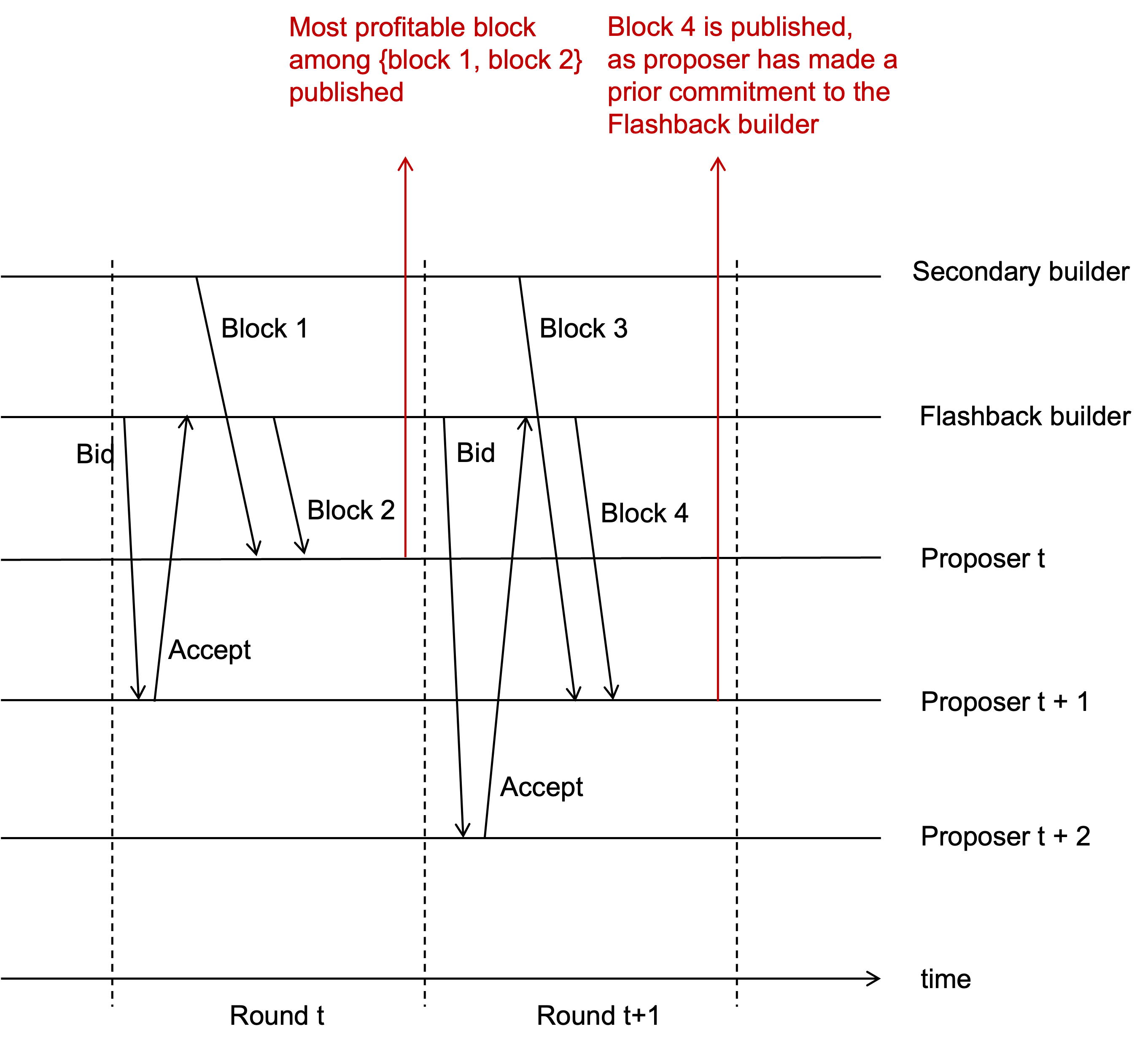}
  \caption{Example showing a timeline of messages exchanged. We assume proposer $t$ has not accepted a bid during time $t-1$. It is also possible for the Flashback builder to not issue a bid during a round.}
  \label{fig:timeline}
\end{figure}

Consider round $t$ in the system. 
Let $Q_t$ denote the set of unconfirmed private transactions available to the Flashback builder at time $t$ such that (1) transactions in $Q_t$ expire only after time $t+1$, and (2) a transaction in $Q_t$ has not been previously auctioned off to proposer $t$.
Let $\rho > 0$ be an estimate of the average reward expected to be earned by a proposer (we discuss how to compute this estimate in \S\ref{s: threshold}).    
For a parameter $k < K$ (recall, $K$ is the block size) let $Q_t^{1:k}$ be the set of $k$ transactions in $Q_t$ that have the highest transaction fees. 
The Flashback builder advertises the transaction fees of the transactions in $Q_t^{1:k}$ and a parameter $0 < r_1 < 1$ to the proposer at time $t+1$ (\S\ref{s: threshold} discusses how to choose $r_1$). 
If the $(t+1)$-th proposer accepts the offered transactions, the Flashback builder reserves the transactions in $Q_t^{1:k}$ to the proposer. 
During the next round, $t+1$, the Flashback builder includes the reserved transactions in the block it builds and sends the block to the proposer of that round. 
While building the block, any remaining space in the block after adding the reserved private transactions is filled up by unconfirmed public transactions.
For transactions that have been reserved ahead of time, the proposer at $t+1$ receives a fraction $(1-r_1)$ of the fees of those transactions; 
while the Flashback builder receives the remaining $r_1$ fraction of the fees. 
For the public transactions, or for private transactions that have not been previously reserved by the $(t+1)$-th proposer, a fraction $r_2$ of the fees is received by the proposer while the builder receives the remaining $(1-r_2)$ fraction. 
If the $(t+1)$-th proposer accepts the Flashback builder's bid at time $t$, the $(t+1)$-th proposer commits to publishing the Flashback builder's block at time $t+1$. 
This is the main advantage of offering transactions to the future proposer---if the $(t+1)$-th proposer has committed to receiving a Flashback block at time $t$, then during time $t+1$ even if the secondary builder's block is more profitable the proposer must accept only the Flashback block.
In exchange for reducing risk (of receiving a poor payout in the future) at proposer $t+1$, the Flashback builder gains an upfront commitment from the proposer to accepting a Flashback block.  
If any party deviates from protocol violating the other party's trust, the affected party can choose to stop interacting with the deviant party in the subsequent rounds. 

During time $t$, if the $(t+1)$-th proposer is rational, it accepts the offered transactions from the Flashback builder as long as the profits earned from the transactions exceed (or, significantly exceed) the average profit expected by the proposer. 
Otherwise, the proposer rejects the bid. 
If the bid is rejected, the Flashback builder can still advertise a block to the proposer in the next round $(t+1)$. 
To do this, the Flashback builder resorts to the default block-building policy in which it compiles the $K$ highest value (unreserved, unexpired) private and public transactions known to it and forms a block.  
All transactions in this block are offered at the default rate of $r_2$ (i.e., proposer receives fraction $(1-r_2)$ of reward). 
Figure~\ref{fig:timeline} illustrates a timeline depicting the actions of the primary builder and proposers.

The $(t+1)$-th proposer is likely to accept the bid if the threshold $\rho$ is chosen large enough. 
We discuss how to choose $\rho$ next.

\subsection{Threshold $\rho$ and rate $r_1$}
\label{s: threshold}

For any time $t$, let $V[t]$ be the reward earned by the proposer who published the block at time $t$. 
We estimate the average reward expected to be earned by a proposer at time $t+1$ without accepting a Flashback bid at time $t$ as 
\begin{align}
\hat{E}[V[t+1]] = \frac{\sum_{t' = t - W - 1}^{t-1} V[t'] \mathbf{1}_{\text{Proposer } t' \text{did not accept a Flashback bid at time } t' - 1}}{\sum_{t' = t - W - 1}^{t-1} \mathbf{1}_{\text{Proposer } t' \text{did not accept a Flashback bid at time } t' - 1}},   
\end{align}
where $\mathbf{1}_{\cdot}$ is the indicator function and $W$ is a moving-average window size parameter.
Note the all the information needed to compute $\hat{E}[V[t+1]]$ is available to the Flashback builder from the public blockchain data and logs of past private communication to proposers. 
To encourage proposer $t+1$ to accept the Flashback builder's bid, we choose a threshold $\rho[t]$ as 
\begin{align} 
\rho[t] = (1 + \epsilon) \hat{E}[V[t+1]].
\end{align}
Here, $\epsilon > 1/(1-r_2) - 1$ is a parameter to ensure the rewards gained by proposer $t+1$ if it accepts the bid are significantly higher than the expected reward. 


Let $\mathcal{R}[Q_t^{1:k}]$ be the total amount of fees in the bid transactions $Q_t^{1:k}$. 
From the bidding policy, we must have $\mathcal{R}[Q_t^{1:k}] > \rho[t]$.
The bidding rate $r_1$ is chosen as 
\begin{align}
r_1 = \frac{\mathcal{R}[Q_t^{1:k}] - (1+\epsilon)(1-r_2) \hat{E}[V[t+1]]}{\mathcal{R}[Q_t^{1:k}]}. \label{eq:r1comp}
\end{align}
Since $\mathcal{R}[Q_t^{1:k}] > \rho[t]$, we have $\mathcal{R}[Q_t^{1:k}] > (1+\epsilon)\hat{E}[V[t+1]] > (1-r_2)(1+\epsilon)\hat{E}[V[t+1]]$. 
Equation~\eqref{eq:r1comp} above is therefore well defined.  
Under this choice of $\rho[t]$ and $r_1$, the rewards gained by the $(t+1)$-th proposer $(1-r_1) \mathcal{R}[Q_t^{1:k}]$ from the bid transactions exceed the expected reward $\hat{E}[V[t+1]]$, thus encouraging the proposer to accept the bid.




\section{Analysis}
\label{s:analysis}
In this section we analyze a simplified model of our proposed builder design. 
Let $t \in \{0, 1, 2, \ldots \}$ denotes the round number. 
We assume there are two builders in the system: a primary builder and a secondary builder. 
The primary builder runs our proposed policy and sells transaction bundles to future validators. 
The secondary builder runs the default policy and sells transaction bundles only to current validators. 
Let $X[t]$ denote the total value of transactions arriving privately to the primary builder at round $t$. 
Let $X'[t]$ denote the total value of transactions arriving privately to the secondary builder at $t$. 
Let $Y[t]$ be the total value of public transactions at round $t$. 
These random variables are independent of each other and across time. 
We assume $X[t] \sim \exp(1/\mu_1), X'[t] \sim \exp(1/\mu_2), Y[t] \sim \exp(1/\mu_3)$, where $\mu_1, \mu_2, \mu_3$ denote the expectation of their respective random variables. 
For private transactions, value means how much the transactor pays in direct payment to the builder and block proposer. 
For public transactions, value means the amount of ether that can be extracted by performing an MEV attack on the transaction. 
We assume $\mu_1 + \mu_2 =1$. 

Let $B[t] \in \{ \mathrm{primary, secondary} \}$ denote the builder that builds the block at round $t$. 
$R[t] \in \{0, 1\}$ denotes whether the validator at round $t$ made a prior commitment to receive a block from the primary builder at round $t-1$. 
$R[t] = 1$ if a reservation was made and 0 otherwise. If $R[t] = 1 $ then we must have $B[t] = \mathrm{primary}$, since a validator that has promised to receive the primary builder's block at round $t-1$ must keep up its promise at round $t$.  
$N_b[t]$ is the number of blocks built by builder $b$ at round $t$. 
$N_b[t] = \mathbf{1}_{B[t] = b}$ for $b \in \{ primary, secondary \}$. 

Let $V[t]$ be the total value of block at round $t$ and  let $V_b[t]$ be the value received by builder $b \in \{\mathrm{primary, secondary} \}$ at time $t$.
If a builder $b$ does not build the block at round $t$, we have $V_b[t] = 0$. 
$V_p[t]$ is the value received by validator (proposer) $p$ at time $t$.  


\smallskip 
\noindent 
{\bf Builder policy.}
For simplicity, we consider a policy for the primary builder in which {\em all} private transactions received by the builder at round $t$ are offered to the validator at round $t+1$ for reservation. 
Also, assume that private transactions received at round $t$ expire at round $t+1$ unless they have been reserved for round $t+1$. 
Let $0 < r_2 < 1$ be the default fraction of value that builders retain when they build a block. 
For the policy builder, the fraction of value retained by the builder on transactions reserved to a future validator is $r_1$ where $0< r_1 < 1$. 

\smallskip 
\noindent 
{\bf Validator policy.}
If a validator at round $t$ has reserved transactions from the primary builder in the previous round, then the validator must accept the primary builder's block at round $t$. 
If a validator at round $t$ has not committed to accept any transactions during the previous round, then the validator chooses the builder that offers the highest value block at round $t$. 

A key decision a validator for round $t$ makes is whether or not to accept the primary builder's transactions at round $t-1$. 
To do this, a validator uses a parameter $\rho$ which, informally, can be thought of as an expectation on how much reward the validator hopes to earn. 
If the primary builder at time $t$ offers transactions of value $X[t]$ to the validator $t+1$, then the validator accepts the transactions only if $X[t]$ exceeds the threshold $\rho$. 

\smallskip 
\noindent 
{\bf User policy.}
Users choose the builder to submit their private transactions to depending on the performance of the builders. 
The builder that builds more blocks attracts higher-value private transactions compared to the other builder. 
We quantify performance of a builder $b$ by how high $\mathbf{E}[V_b]$, the value gained by the builder $b$, is relative to the other builder. 
A high $\mathbf{E}[V_b]$ means users' transactions to builder $b$ are likely to be confirmed quickly on the blockchain and vice-versa. 
We define performance score $S_b$ of builder $b$ as $\mathbf{E}[V_b]$. 
The random variables $X[t]$ and $X'[t]$ have distributions that are defined by $S_\mathrm{primary}$ and $S_\mathrm{secondary}$. 
We let 
\begin{align}
\mu_1 &= S_\mathrm{primary} / (S_\mathrm{primary} + S_\mathrm{secondary}) \label{eq:mu1fixedpoint} \\
\mu_2 &= S_\mathrm{secondary}/(S_\mathrm{primary} + S_\mathrm{secondary}). \label{eq:mu2fixedpoint}
\end{align}

\smallskip 
\noindent 
{\bf Game formulation.}
The model described above is a multi-agent game in which the players are the primary builder and all the validators. 
The primary builder's action consists of specifying $r_1$ and $\rho$. 
Note that if $r_1$ is set to be equal to $r_2$ and if $\rho$ is set to be infinity, then the primary builder's policy is exactly the same as the secondary builder's policy.
A validator's action consists of either following our proposed policy or following the default policy. 
If a validator follows our proposed policy, the $t$-th validator commits to accepting  private transactions from the primary builder at time $t-1$ as long as the transactions have value exceeding $\rho$. 
A validator that deviates from our proposed policy does not accept offers ahead of time from the primary builder---the validator simply chooses the best available block at time $t$ and does not make any commitments to the primary builder at time $t-1$. 
We assume $r_2, \mu_3$ are fixed (i.e., part of the environment) and cannot be controlled by the players. 
The objective for the primary builder is to achieve a score that is greater than the score of the secondary builder. 
The objective of a validator is to receive the highest value blocks from the builders. 

\subsection{Validator and Builder Rewards}

Under the builder, user and validator policies mentioned previously, let $V_p^\mathrm{policy}[t]$ denote the value earned by the validator at time $t$. 
We have 
\begin{align}
V_p^\mathrm{policy}[t] &= \mathbf{1}_{R[t] = 1}((1-r_1) X[t-1] + (1 - R[t+1]) (1-r_2) X[t] + (1-r_2) Y[t]) \notag \\ 
& ~~~~~ + \mathbf{1}_{R[t]=0}\max \{  (1-r_2) X'[t] + (1-r_2) Y[t], \notag \\ 
& ~~~~~ (1-R[t+1])(1 - r_2)X[t] + (1-r_2)Y[t] \} 
\end{align}
The expected value earned by a validator is given in Appendix~\ref{apx:vppolicy}

Next, let $V_p^\mathrm{default[t]}$ be the value earned by the validator at time $t$ that is not following our proposed policy, but instead follows the default policy. 
However, we assume the validators at all other time steps---specifically at time $t-1$ and $t+1$,  follow our proposed policy.
Analyzing $V_p^\mathrm{default}$ tells us whether validators have an incentive to deviate from our proposed protocol. 
We have 
\begin{align}
V_p^\mathrm{default}[t] = \max \{  (1-r_2) X'[t] + (1-r_2) Y[t], (1-R[t+1])(1 - r_2)X[t] + (1-r_2)Y[t] \}. 
\end{align}
As before, we compute the expected value in Appendix~\ref{apx:vpdefault}.

Similarly, we next compute the builder rewards. 
Let $V_\mathrm{primary}[t]$ be the value earned by the primary builder at time $t$. 
$V_\mathrm{secondary}[t]$ is the value earned by the secondary builder at $t$. 
We have 
\begin{align}
V_\mathrm{primary}[t] &= \mathbf{1}_{R[t] = 1}(r_1 X[t-1] + (1 - R[t+1]) r_2 X[t] + r_2 Y[t]) \notag \\
& ~~~~~ + \mathbf{1}_{R[t]=0} \mathbf{1}_{(1-r_2) X'[t] + (1-r_2) Y[t] < (1-R[t+1])(1 - r_2)X[t] + (1-r_2)Y[t]} \notag \\ 
& ~~~~~~~~ ((1-R[t+1]) r_2 X[t] + r_2 Y[t]).  
\end{align}

The expected rewards earned by the primary builder is given in Appendix~\ref{apx:vprimary}.

The value earned by the secondary builder is given by
\begin{align}
V_\mathrm{secondary}[t] &= \mathbf{1}_{R[t]=0} \mathbf{1}_{(1-r_2) X'[t] + (1-r_2) Y[t] > (1-R[t+1])(1 - r_2)X[t] + (1-r_2)Y[t]} r_2(X'[t] + Y[t]),  
\end{align}
whose expectation is as in the proposition in Appendix~\ref{apx:vsecondary}. 

\subsection{Equilibrium Analysis}
\label{s:Equilibrium Analysis}

In the following we show that there exists a Nash equilibrium with $\rho < \infty$ for the primary builder, and where validators follow our proposed policy. 
From Equations~\eqref{eq:mu1fixedpoint} and~\eqref{eq:mu2fixedpoint} for a fixed $r_1$ and $\rho$, the value of $\mu_1$ and $\mu_2$ is given by the following fixed point equations. 
\begin{align}
\mu_1 = \frac{\mathbf{E}[V_\mathrm{primary}[t]]}{\mathbf{E}[V_\mathrm{primary}[t]] + \mathbf{E}[V_\mathrm{secondary}[t]]}, 
\mu_2 = \frac{\mathbf{E}[V_\mathrm{secondary}[t]]}{\mathbf{E}[V_\mathrm{primary}[t]] + \mathbf{E}[V_\mathrm{secondary}[t]]}, \label{eq:fixedpointmu1mu2e}
\end{align}
where $\mathbf{E}[V_\mathrm{primary}[t]]$ and $\mathbf{E}[V_\mathrm{secondary}[t]]$ are as in Propositions~\ref{prop:primarybuilder} and~\ref{prop:secondarybuilder} respectively. 

For the primary builder to achieve a score greater than the secondary builder, we must have 
\begin{align}
\mathbf{E}[V_\mathrm{primary}[t]] - \mathbf{E}[V_\mathrm{secondary}[t]] &> 0 
\iff \frac{\mu_1}{\mu_2}\mathbf{E}[V_\mathrm{secondary}[t]] - \mathbf{E}[V_\mathrm{secondary}[t]] > 0 \notag \\
\iff \mu_1 &> \mu_2,  \label{eq:mu1bigmu2rew}
\end{align}
where the second inequality above follows from Equation~\eqref{eq:fixedpointmu1mu2e}. 
Since $\mu_1 + \mu_2 = 1$, a solution (i.e., a $r_2, \rho$ value) to the fixed point Equation~\eqref{eq:fixedpointmu1mu2e} where $\mu_2 < 1/2$ guarantees the primary builder to achieve a score that is greater than that of the secondary builder. 

For a validator to follow our proposed policy and not deviate back to the default policy, we must have $\mathbf{E}[V_p^\mathrm{policy}[t]] > \mathbf{E}[V_p^\mathrm{default}[t]]$. 
The following proposition shows a sufficient condition for this. 
\begin{lemma}
\label{lem:validatorpolicymucondi}
For $r_1 = 0, \rho > \mu_2$ and $\mu_2 < 1/2$, we have $\mathbf{E}[V_p^\mathrm{policy}[t]] > \mathbf{E}[V_p^\mathrm{default}[t]]$. 
\end{lemma}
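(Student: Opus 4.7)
The plan is to isolate the single event on which the two expressions differ and to bound the conditional gap on that event.

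First I would observe that $R[t] = \mathbf{1}_{X[t-1] > \rho}$ is a function only of $X[t-1]$, which is independent of all the variables $X[t], X'[t], Y[t]$ and of $R[t+1]$ (the last being a function only of $X[t]$). Moreover, on the event $\{R[t] = 0\}$ the policy formula $V_p^\mathrm{policy}[t]$ collapses to the very same $\max\{\cdot,\cdot\}$ expression that defines $V_p^\mathrm{default}[t]$. Hence the two expressions contribute identically in expectation on that event, and I can reduce the problem to
\begin{align*}
\mathbf{E}[V_p^\mathrm{policy}[t]] - \mathbf{E}[V_p^\mathrm{default}[t]] = P(R[t]=1)\,\mathbf{E}\bigl[V_p^\mathrm{policy}[t] - V_p^\mathrm{default}[t] \bigm| R[t]=1\bigr].
\end{align*}

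Next, on $\{R[t]=1\}$, I would substitute $r_1 = 0$ and introduce the shorthands $A = (1-r_2)X'[t] + (1-r_2)Y[t]$ and $B = (1-R[t+1])(1-r_2)X[t] + (1-r_2)Y[t]$, so that the policy reward equals $X[t-1] + B$ while the default reward equals $\max(A,B)$. A short algebraic step then gives
\begin{align*}
V_p^\mathrm{policy}[t] - V_p^\mathrm{default}[t] = X[t-1] - (A - B)^+ = X[t-1] - \bigl((1-r_2)X'[t] - (1-R[t+1])(1-r_2)X[t]\bigr)^+.
\end{align*}
Taking conditional expectation, the memoryless property of $X[t-1] \sim \exp(1/\mu_1)$ yields $\mathbf{E}[X[t-1] \mid X[t-1] > \rho] = \rho + \mu_1$, and dropping the non-negative subtracted term inside the positive part gives $\mathbf{E}[(\cdot)^+] \le (1-r_2)\mathbf{E}[X'[t]] = (1-r_2)\mu_2$. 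Thus the conditional gap is at least $\rho + \mu_1 - (1-r_2)\mu_2 > \rho - \mu_2 > 0$, using the hypothesis $\rho > \mu_2$ together with $\mu_1 > 0$ and $r_2 > 0$. Since $P(R[t]=1) = e^{-\rho/\mu_1} > 0$, the lemma follows.

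The main obstacle I anticipate is the independence bookkeeping in the first step: one has to verify carefully that conditioning on $\{X[t-1] > \rho\}$ leaves the joint distribution of $(X[t], X'[t], Y[t], R[t+1])$ unchanged, so that the $R[t]=0$ branch of the policy expression and $V_p^\mathrm{default}[t]$ genuinely contribute equal expectations. Everything after that is a routine exponential-distribution calculation. I note in passing that the argument above invokes only $\rho > \mu_2$ and $\mu_1 > 0$; the hypothesis $\mu_2 < 1/2$ in the statement is inherited from the surrounding equilibrium analysis (where it encodes the builder-score dominance $\mu_1 > \mu_2$) but is not strictly needed for this validator-incentive inequality.
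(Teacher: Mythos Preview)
Your argument is correct and takes a genuinely different route from the paper. The paper substitutes the closed-form expressions from Propositions~\ref{prop:validatorrewardpolicy} and~\ref{prop:validatordeviation}, simplifies their difference symbolically (with computer algebra) into a single exponential-weighted bracket, and then chains three crude inequalities on that bracket; the last of those inequalities is where the hypothesis $\mu_2 < 1/2$ actually enters. By contrast, you bypass the explicit formulas entirely: you couple the two policies on a common probability space, note that they coincide pointwise on $\{R[t]=0\}$, and on $\{R[t]=1\}$ reduce the gap to $X[t-1] - (A-B)^+$, which you bound using only the memoryless property and $(A-B)^+ \le (1-r_2)X'[t]$.

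Your approach is both shorter and strictly more informative: it shows that the conclusion already follows from $\rho > \mu_2$ alone (indeed from the even weaker $\rho + \mu_1 > (1-r_2)\mu_2$), so the condition $\mu_2 < 1/2$ is an artifact of the paper's bounding chain rather than an intrinsic requirement of the lemma. The independence bookkeeping you flag as the main obstacle is straightforward here, since $R[t]$ is measurable with respect to $X[t-1]$ while $V_p^{\mathrm{default}}[t]$ and the pair $(A,B)$ depend only on $(X[t],X'[t],Y[t])$.
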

(Proof in Appendix~\ref{apx:validatorpolicymucondi})

From Equation~\eqref{eq:mu1bigmu2rew}, it therefore suffices if we can show there exists a solution to the fixed point equations in Equation~\eqref{eq:fixedpointmu1mu2e} with $\mu_2 < 1/2$ and $\rho >\mu_2$.  
We rewrite Equation~\eqref{eq:fixedpointmu1mu2e} as 
\begin{align}
\mathbf{E}[V_\mathrm{primary}[t]]\mu_2 - \mathbf{E}[V_\mathrm{secondary}[t]] \mu_1 =0, \label{eq:emuminuemu}
\end{align}
with $\mu_1 = 1 - \mu_2$. 
To show the existence of a fixed point with $\mu_2 < 1/2$, we first show that at $\mu_2 = 1/2$ and for sufficiently large $\rho$, the left hand side of Equation~\eqref{eq:emuminuemu} is negative. 
\begin{lemma}
\label{lem:fixedptlesszero}
For $\mu_2 = 1/2, r_1=0$ and $\rho > \ln(3)/2$ we have $\mathbf{E}[V_\mathrm{primary}[t]]\mu_2 - \mathbf{E}[V_\mathrm{secondary}[t]] \mu_1 < 0$.
\end{lemma}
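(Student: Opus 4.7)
The plan is to evaluate both sides of the inequality at the specified parameters by invoking the closed-form expressions for $\mathbf{E}[V_\mathrm{primary}[t]]$ and $\mathbf{E}[V_\mathrm{secondary}[t]]$ that will have been computed in Propositions~\ref{prop:primarybuilder} and~\ref{prop:secondarybuilder} (appendix). First I would substitute $r_1 = 0$ and $\mu_1 = \mu_2 = 1/2$ into those expressions, and abbreviate $p := e^{-\rho/\mu_1} = e^{-2\rho}$, which equals $P(X[t] > \rho)$ and hence both $P(R[t]=1)$ and $P(R[t+1]=1)$. Note that $\rho > \ln(3)/2$ is equivalent to $p < 1/3$, which is the cleanest form in which the threshold will appear in the final computation.

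Next I would exploit the symmetry of the two private arrival streams at $\mu_1 = \mu_2$. The secondary builder earns $r_2 (X'[t] + Y[t])$ whenever $R[t] = 0$ and its block beats the primary builder's competitive block, which reduces (using independence, $R[t+1] = \mathbf{1}_{X[t] > \rho}$, and the pairwise $\exp(2)$ distributions) to a simple combination of integrals like $\mathbf{E}[X'[t] \mathbf{1}_{X'[t] > X[t] \text{ or } R[t+1]=1}]$. The primary builder's expected reward decomposes into three contributions: (i) $r_1 X[t-1] = 0$ on reserved transactions, (ii) $r_2 X[t] \mathbf{1}_{X[t] \leq \rho}$ plus $r_2 Y[t]$ whenever $R[t] = 1$, and (iii) the contribution from winning a competitive bid at $R[t] = 0$ with $R[t+1] = 0$ and $X'[t] < X[t]$. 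Each of these is a standard exponential moment/tail integral that simplifies to a polynomial in $p$ with coefficients depending on $\rho$ and on $\mu_3$.

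Once the algebra is carried out, the key cancellation to look for is that the $\mu_3$ (public transaction) terms, which enter additively with coefficient $r_2 P(R[t]=1) + r_2 P(R[t]=0, \text{primary wins})$ on the primary side and $r_2 P(R[t]=0, \text{secondary wins})$ on the secondary side, only partially offset: the primary collects $r_2 \mu_3$ more often (both when $R[t]=1$ and when it wins the competition) but loses transaction fee revenue on reserved transactions because $r_1 = 0$. Dividing through by $\mu_2 = 1/2$, the inequality $\mathbf{E}[V_\mathrm{primary}[t]]\mu_2 - \mathbf{E}[V_\mathrm{secondary}[t]]\mu_1 < 0$ becomes, after grouping powers of $p$, an inequality of the form $f(p,\rho) < 0$ whose boundary I expect to reduce to $p = 1/3$ at the most stringent choice of $\rho$.

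The main obstacle is the careful bookkeeping of the joint events $\{R[t], R[t+1], X'[t] \lessgtr X[t]\}$: $R[t+1]$ is not independent of $X[t]$ (indeed $R[t+1] = \mathbf{1}_{X[t] > \rho}$), so every expectation involving $X[t]$ must be split along $\{X[t] > \rho\}$ and $\{X[t] \leq \rho\}$ before any integration, and the integrals $\int_0^\rho 2x e^{-2x}\, dx$ and $\int_0^\rho 2x e^{-4x}\, dx$ (the latter arising from pairing with $P(X'[t] < X[t])$) must be combined to extract the clean threshold $p < 1/3$. Showing that all the $\rho$-dependent remainder terms have the correct sign in the allowed range, rather than being a cancellation miracle only at $p = 1/3$, is where the argument requires the most care.
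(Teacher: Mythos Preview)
Your proposal is correct and follows essentially the same route as the paper: substitute $r_1=0$, $\mu_1=\mu_2=1/2$ into the closed forms from Propositions~\ref{prop:primarybuilder} and~\ref{prop:secondarybuilder}, collect the result as a polynomial in $p=e^{-2\rho}$ with coefficients linear in $\rho$ and $\mu_3$, and verify negativity once $p<1/3$. The paper simply quotes the simplified three-term expression (obtained via computer algebra), drops the already-negative $p^3$ term, and checks that each remaining coefficient is negative exactly when $e^{-2\rho}<1/3$; your anticipated bookkeeping and the identification of the $\mu_3$ coefficient as the binding constraint are both on target.
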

(Proof in Appendix~\ref{apx:fixedptlesszero})

Next, we show there exists a $\mu_2 < 1/2$ where $\mathbf{E}[V_\mathrm{primary}[t]]\mu_2 - \mathbf{E}[V_\mathrm{secondary}[t]] \mu_1$ is positive for sufficiently large $\rho$.
\begin{theorem} 
\label{thm:fixedpointlesshalf}
For $r_1 = 0$ and sufficiently large $\rho$, Equation~\eqref{eq:emuminuemu} has a fixed point solution with $0<\mu_2<0.5$.  
\end{theorem}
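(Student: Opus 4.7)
The plan is to view the left side of Equation~\eqref{eq:emuminuemu} as a continuous function of $\mu_2$ on $[0, 1/2]$ and apply the intermediate value theorem (IVT). Concretely, setting $\mu_1 = 1 - \mu_2$, define
\[
F(\mu_2) \;:=\; \mathbf{E}[V_\mathrm{primary}[t]]\,\mu_2 \;-\; \mathbf{E}[V_\mathrm{secondary}[t]]\,(1 - \mu_2),
\]
where the two expectations are the closed-form expressions in $\mu_1, \mu_2, \mu_3, r_1, r_2, \rho$ obtained from the propositions in Appendix~\ref{apx:vprimary} and Appendix~\ref{apx:vsecondary}. Since these expressions are built from exponential moments and probabilities of threshold events, $F$ is continuous in $\mu_2$ on the closed interval.

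At the two endpoints of $[0, 1/2]$: when $\mu_2 = 1/2$, Lemma~\ref{lem:fixedptlesszero} immediately gives $F(1/2) < 0$ under $r_1 = 0$ and $\rho > \ln(3)/2$. When $\mu_2 = 0$, the law $X'[t] \sim \exp(1/\mu_2)$ degenerates to a point mass at zero, so the secondary builder can never outbid the primary whenever $X[t] > 0$, and the formula in Appendix~\ref{apx:vsecondary} collapses to $\mathbf{E}[V_\mathrm{secondary}[t]] = 0$. Hence $F(0) = 0$.

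The crux is to exhibit some $\tilde\mu_2 \in (0, 1/2)$ with $F(\tilde\mu_2) > 0$; IVT will then produce a zero of $F$ in $(\tilde\mu_2, 1/2)$, which is the required fixed point. I would obtain this by computing the right derivative
\[
F'(0^+) \;=\; \mathbf{E}[V_\mathrm{primary}[t]]\big|_{\mu_2 = 0} \;-\; \frac{d\,\mathbf{E}[V_\mathrm{secondary}[t]]}{d\mu_2}\bigg|_{\mu_2 = 0^+},
\]
and showing it is strictly positive for all sufficiently large $\rho$. At $\mu_2 = 0$ the primary is the sole recipient of private transactions ($\mu_1 = 1$), and as $\rho \to \infty$ the reservation event $\{X[t] > \rho\}$ has vanishing probability, so $\mathbf{E}[V_\mathrm{primary}[t]]|_{\mu_2 = 0}$ approaches the default-builder reward $r_2(1 + \mu_3)$ from above. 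The derivative of $\mathbf{E}[V_\mathrm{secondary}[t]]$ at $0^+$, obtained by differentiating the closed form in $\mu_2$, reduces in the same $\rho \to \infty$ limit (where $R[t] \to 0$ almost surely) to a finite constant arising from the probability and conditional expectation of $\{X'[t] > X[t]\}$ for two independent exponentials. Comparing the two quantities yields $F'(0^+) > 0$ for $\rho$ large enough, and hence $F > 0$ on a right neighborhood of $0$.

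The main obstacle will be the derivative calculation at $\mu_2 = 0^+$: one has to differentiate through the indicator events that couple $X'[t]$ with $X[t]$, $Y[t]$, and $R[t+1]$, and then verify the strict inequality uniformly in $\rho$ for all large $\rho$. The structural reason this works is that with $r_1 = 0$ the primary ``donates'' reserved value only on a set whose probability decays with $\rho$, so the penalty responsible for $F(1/2) < 0$ in Lemma~\ref{lem:fixedptlesszero} is negligible near $\mu_2 = 0$, where there is essentially no secondary competitor; consequently the sign of $F$ must flip somewhere in $(0, 1/2)$ and the sought fixed point exists.
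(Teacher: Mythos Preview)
Your IVT strategy matches the paper's, but the boundary analysis at $\mu_2=0$ contains a genuine error. You assert that when $\mu_2\to 0^+$ the secondary builder ``can never outbid the primary whenever $X[t]>0$,'' so $\mathbf{E}[V_\mathrm{secondary}[t]]\to 0$ and $F(0)=0$. This overlooks the event $\{R[t]=0,\;R[t+1]=1\}$: when $R[t+1]=1$ the primary's private value $X[t]$ is \emph{reserved for round $t+1$} and the primary's round-$t$ block contains only $Y[t]$; the comparison is then $X'[t]$ against $(1-R[t+1])X[t]=0$, so any positive $X'[t]$ wins for the secondary and it collects $r_2(X'[t]+Y[t])$. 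Taking $\mu_2\to 0^+$ in Proposition~\ref{prop:secondarybuilder} gives
\[
\lim_{\mu_2\to 0^+}\mathbf{E}[V_\mathrm{secondary}[t]]
= r_2\mu_3(1-e^{-\rho})\,e^{-\rho} \;>\; 0,
\]
hence $F(0^+)=-r_2\mu_3(1-e^{-\rho})e^{-\rho}<0$, not $0$. Your derivative formula for $F'(0^+)$ also drops the nonvanishing $\mathbf{E}[V_\mathrm{secondary}]\big|_{0^+}$ term. So the chain ``$F(0)=0$, $F'(0^+)>0$, hence $F>0$ on a right neighborhood'' breaks at the first step, and the remaining argument would have to be reworked as a quantitative statement (the negative boundary value is $O(e^{-\rho})$, so one would need a uniform lower bound on $F'$ near $0$ that beats it; you have not established this).

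The paper avoids the boundary entirely: it fixes an arbitrary interior point $\mu_2^*\in(0,1/2)$, simplifies $F(\mu_2^*)/r_2$ to the constant $c=\mu_2^*(1-\mu_2^*)(1-2\mu_2^*)>0$ plus four exponentially decaying remainders, and produces explicit thresholds on $\rho$ (in terms of $c$ and $\mu_3$) above which each remainder exceeds $-c/4$, so $F(\mu_2^*)>0$. Combined with Lemma~\ref{lem:fixedptlesszero}, IVT yields a root in $(\mu_2^*,1/2)$. This sidesteps the discontinuity of the probabilistic model at $\mu_2=0$ and the delicate derivative bookkeeping your approach would require.
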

(Proof in Appendix~\ref{apx:fixedpointlesshalf})

Combining Lemma~\ref{lem:validatorpolicymucondi} and Theorem~\ref{thm:fixedpointlesshalf} we conclude there exists a $\rho < \infty$ and $r_1 = 0$ where the primary builder and validators are at equilibrium and do not have an incentive to deviate from protocol. 
An $r_1$ value of $0$ means the primary builder does not earn any reward from private transactions bid to future proposers. 
Despite this, the analysis above shows that the overall rewards earned by the primary builder exceed that of the secondary builder. 
By bidding high-value transactions at a discounted rate to (future) proposers and obtaining upfront commitments, we increase the chance of proposers publishing blocks built by the primary builder. 
This in turn has the effect of increasing the score of the primary builder, and consequently attracts more high-value private attractions to be sent to the primary builder.
While some of the private transactions are reserved for future proposers, the ones that are not reserved earn a reward at a rate of $r_2$ for the primary builder resulting in a net-positive effect for the builder. 
The analysis highlights the complex interplay between various factors affecting total rewards earned, and shows how policies can be  counter-intuitive yet beneficial. 

\section{Evaluation}
\label{s: evaluation}

\subsection{Experiment Setup}
\label{s: setup}
To align our simulation on Flashback with real-world data distribution, we compiled a dataset comprising 10,000 blocks spanning from block number 15,570,981 to 15,580,985. These blocks collectively contain 157,946 transactions. Our dataset consists of a 7-part set for each transaction, including details such as transaction hash, sender and receiver addresses, direct payment fee, transaction fee, gas price, and gas used.
Additionally, for each block, we generated a 3-part dataset comprising the block number, the list of transactions within the block, and the base fee. These data points were meticulously extracted from a credible source—specifically, Etherscan~\cite{etherscan}. Transactions within each block represent successfully mined transactions, accompanied by their unique hash and index within the block.
To distinctly identify private and public transactions, we carefully considered 2,740 instances of private transactions, constituting 1.73\% of the overall transaction count. This differentiation was achieved through cross-referencing another reputable source, Zeromev~\cite{zeromev}.
\begin{figure}[htp]
  \centering
    \includegraphics[width=0.8\linewidth]{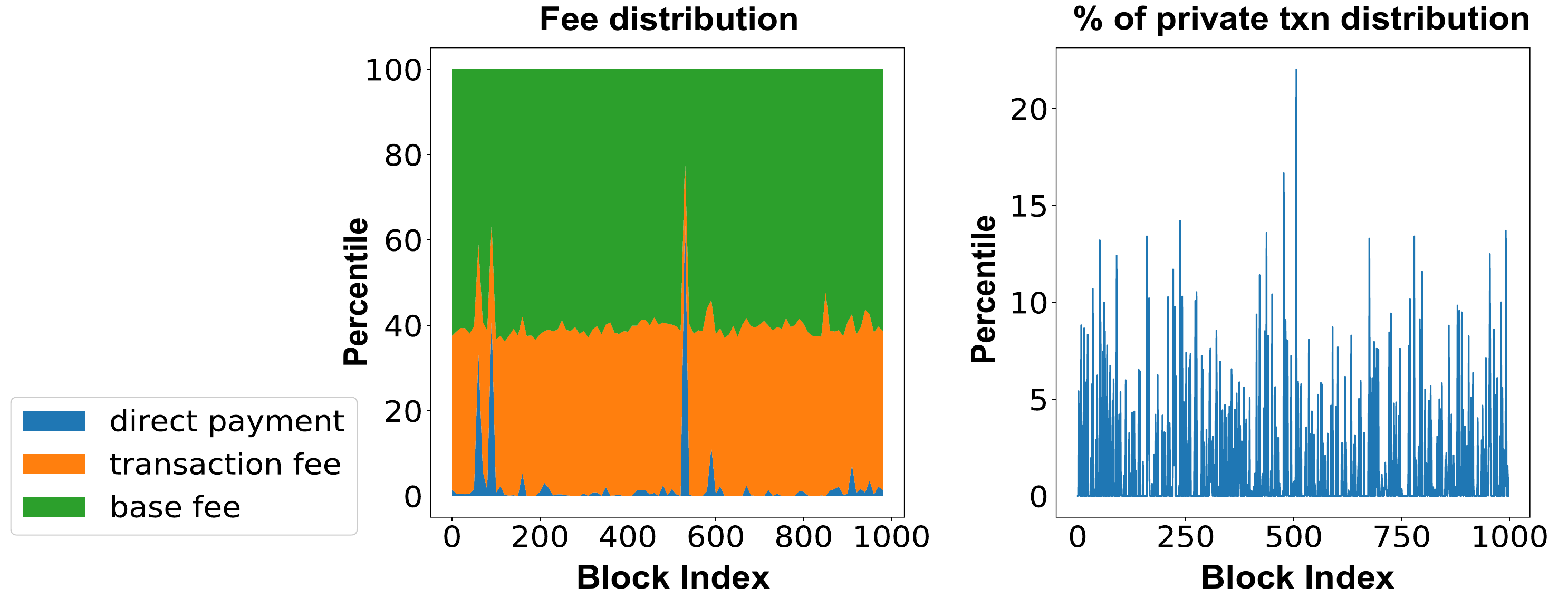}
    \caption{Share of user payments and percentile of private transactions} \label{fig: fee_distribution}
\end{figure}

Figure~\ref{fig: fee_distribution} illustrates the fee distributions among blocks in the dataset, which fits well to an exponential distribution with around 3\% being private transactions among all the transaction and around 10\% of the profits come from direct payments, constituting a proportion of all profits derived from both direct payments and transaction fees.
Although private transactions represent only 3\% of the total transaction volume, they contribute to over 10\% of the overall profits. 
This observation suggests that private transactions tend to yield higher profitability compared to public transactions.

Flashback provides the primary builder with two actions. 
The first action involves sending bid messages to the proposer in the subsequent round if the primary builder identifies that the current private transactions meet a high standard of quality, denoted as $\rho$.
Initially, we conduct experiments without placing bids and record the distribution of private transaction fees. 
Subsequently, based on the bidding strategy for private transactions, we estimate the fees using recorded data. 
During this process, we document the distribution of private transaction values for proposers who do not accept bids.
The bid rate $r_1$ is determined by the threshold $\rho$ in such a way that for bids valued at $\rho$, the proposer can still attain profits higher than their expectations.

At the user's end, we employ a scoring function model to evaluate builders based on historical performance. 
Builders with higher scores are preferred by users, increasing the likelihood to be chosen.
However, we acknowledge that some users might not be as sensitive to scores and could select poorly performing builders despite their scores.
In the initial phase of the experiment, users possess an initial score for the builders, which influences their selection in the first round. 
We've set the initial knowledge length to 1 for easy overwriting.
In Section~\ref{s: bid strategy}, we extend the duration of initial knowledge, setting to 200 rounds.

Private transactions hold time sensitivity, as users anticipate prompt processing. Hence, we introduce a Time-to-Live (TTL) parameter for all private transactions. These transactions are set to expire if they remain in the private pool for TTL rounds.
Builders can retain private transactions for a maximum of TTL rounds, providing them an incentive to process these transactions promptly to extract transaction fees. 
The average waiting time plays a pivotal role in how users assess builders. Users tend to favor builders capable of processing their transactions promptly or, at the very least, preventing their private transactions from expiring.
Initially, we've set TTL to 10 rounds. Later in Section~\ref{s: bid ttl}, we will explore varying TTL to analyze its impact on the game.

The processing delay and missing rate of transactions remain private, as these transactions are sent through private channels. 
We make an assumption that users can share their experiences with these builders and assess their performance based on community reviews. 
These experiences are publicly accessible within the community, allowing users to factor them into their scoring process.
For example, in a block containing 100 transactions, if 10\% of users are willing to provide feedback, they can broadcast their waiting times or whether their transactions became outdated to the community. 

\subsection{Experiment Result}
\label{s: result}

\begin{figure}
  \centering
  \begin{subfigure}[b]{0.75\textwidth}
    \includegraphics[width=\linewidth]{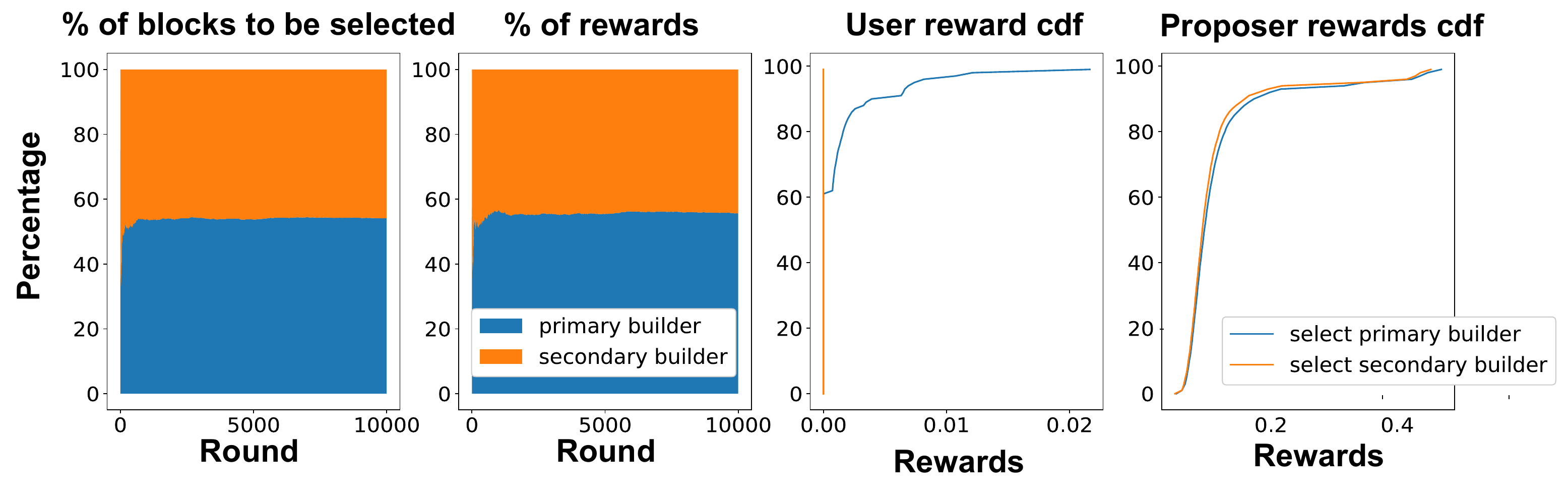}
    \caption{Rewards distributions for builder, user and proposer} \label{fig:top3_ttl10}
  \end{subfigure}%
  \hspace*{\fill}
  \begin{subfigure}[b]{0.23\textwidth}
    \includegraphics[width=\linewidth]{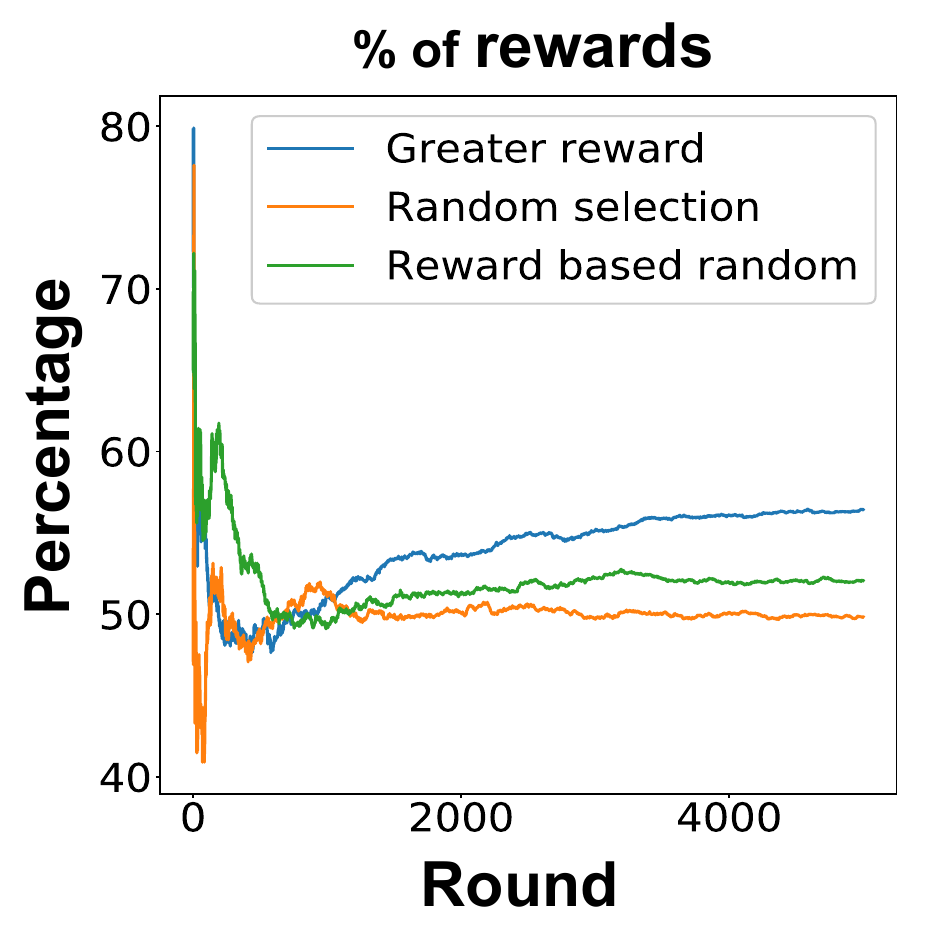}
    \caption{Primary builder's rewards distribution} \label{fig: algobid}
  \end{subfigure}%
  \caption{Rewards distributions}
\label{fig:RewardsDistributions}
\end{figure}

We first conduct the simulations as in Figure~\ref{fig:top3_ttl10}, where the primary builder is allowed to bid their top 3 most valuable private transactions to the proposer in the subsequent round. Additionally, all transactions have a TTL of 10 rounds, after which they expire.
The builder's design and policy strategy are outlined in Section~\ref{s: builderdesign}.
We execute the experiments over 10,000 rounds, observing the network state convergence at approximately 1000 rounds.

In Figure~\ref{fig:top3_ttl10}, we present the builder's block ratio and reward distributions for builder, user and proposer.
In the first subfigure, we illustrate the block ratio attributed to the primary builder and the secondary builder. After 500 rounds, the primary builder gains an advantage, with approximately 55\% of the blocks attributed to them, while the secondary builder holds 45\% of the blocks.
The second subfigure depicts the builder's reward distribution according to the score function. 
As builders receive fixed-rate rewards for packaging blocks, their rewards are closely correlated with the percentile of blocks they construct. 
Notably, the primary builder's rewards surpass those of the secondary builder by 20\%, , calculated as 55\% compared to 45\%.
The third subplot plots the CDF of users' rewards.
Primary builder gets $r_1$ through $\pi_\mathrm{build}$.
Around 40\% of the users who send their private transactions to the primary builder get part of the cost back.
Conversely, the secondary builder adheres to the default strategy where users cover all transaction fees and direct payments, resulting in a constant curve at 0 (orange curve).
The forth subplot plots the CDF of proposers' rewards, proposer who select primary builder's block gets a greater rewards distribution.
In conclusion, all the primary builders, user and proposers all experience enhanced rewards through $\pi_\mathrm{build}$.
The state of convergence after 1000 rounds demonstrates a more favorable equilibrium among these three players.

Figure~\ref{fig: algobid} illustrates the reward distributions for the primary builder, influenced by 3 simple bidding acceptance policies adopted by the target proposers.
When the proposer prioritizes offers with greater rewards, which is applied in Flashback, the primary builder stands to receive increased rewards.
In cases where the proposer accepts bids randomly, such as with a probability of 50\%, the proposer's rewards tend to approach the 50\% mark.
When employing a bidding acceptance policy that involves randomness based on reward comparisons, the primary builder's rewards fall between the two aforementioned scenarios.

\subsubsection{Initial state}
\label{s: initial start}
The aforementioned experiment initializes equal scores for both builders, where users initially lack preferences and treat the builders impartially.
Currently, Flashbot contributes approximately 70\% of blocks on the Ethereum network, indicating significant user attraction to their services.
We also explore a scenario of challenging initiation, where we introduce a primary builder to the existing network. 
Initially, users exhibit a strong preference for the established secondary builder, resulting in the primary builder having a substantially lower initial score in comparison.
We proceed with experiments involving primary builder scores set at 1, 1/2, 1/4, 1/8, and 1/16 times the initial score of the secondary builder. 
The length of the score period spans 200 rounds, ensuring the initial score's lasting influence over an extended duration.

\begin{figure}[!tbp]
  \centering
  \begin{subfigure}[b]{0.48\linewidth}
    \includegraphics[width=\linewidth]{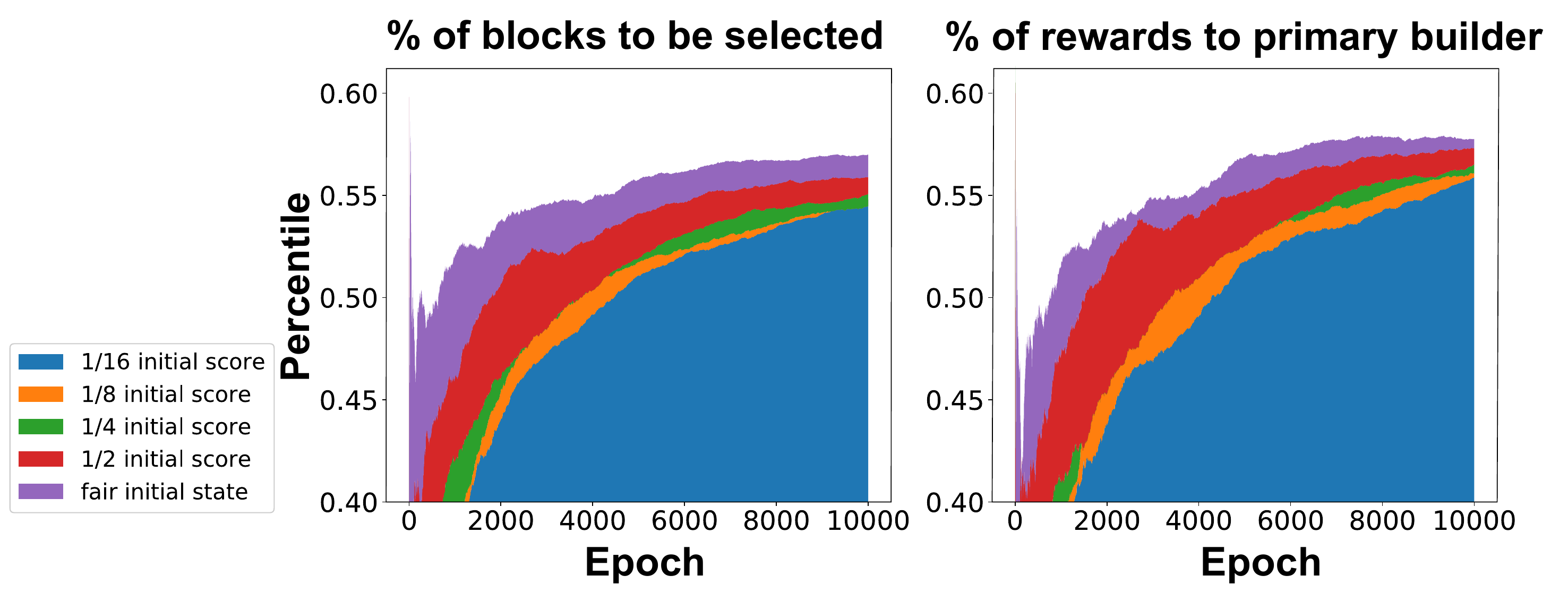}
    \caption{Primary builder starts with lower score} 
    \label{fig: poorstart}
  \end{subfigure}%
  \begin{subfigure}[b]{0.5\linewidth}
    \includegraphics[width=\linewidth]{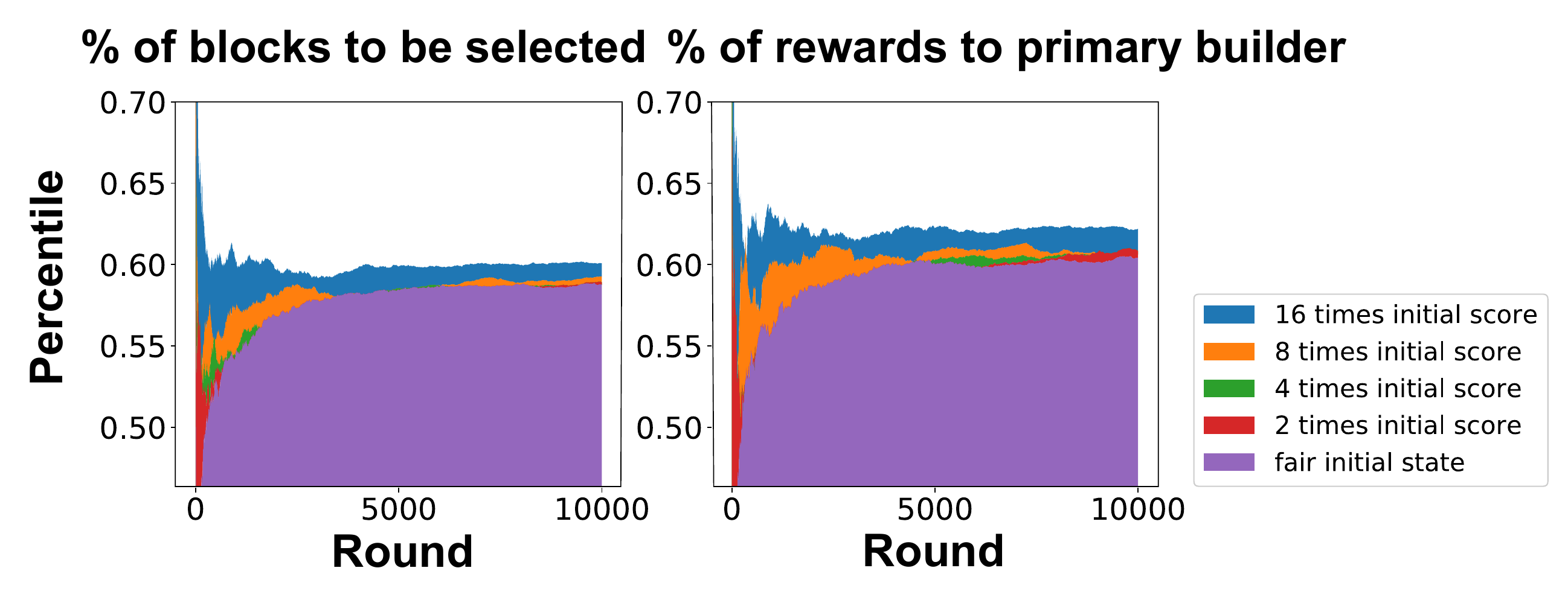}
    \caption{Primary builder starts with higher score} 
    \label{fig: richstart}
  \end{subfigure}%
  \caption{Primary builder starts with different initial states}
\label{fig:initial state}
\end{figure}

In Figure~\ref{fig: poorstart}, We observed that primary builders with lower initial preference states take a longer time to reach the convergence state. Notably, the blue line, representing a primary builder with only 1/16 of the score, has not fully converged even after 10,000 rounds.

In contrast to the aforementioned scenario, we explore another case where a new primary builder experiences a surge in popularity, with users exhibiting significantly greater preference for them.
In these experiments, we initiate scenarios with primary builder scores set at 1, 2, 4, 8, and 16 times the initial score of the secondary builder. The results, depicting the percentile of blocks or rewards for both builders, are illustrated in Figure~\ref{fig: richstart}.

Higher initial scores yield higher percentiles, especially in the initial 4000 rounds. However, they eventually converge to around 60\%, as the fix point proofed in full paper~\cite{FlashbackFullPaper}.
While the initial state can influence the early rounds, it's evident that the settings of  $\rho$ and $r_1$ ultimately guide the system towards a comparable convergence state, highlighting the robustness of the observed dynamics.

Setting a substantial difference in initial scores, leading all private transactions to a single builder, prevents the other from updating its performance. 
These cases might represent additional fixed points, but their realization in reality is challenging.

\subsubsection{Different bid strategy}
\label{s: bid strategy}
In previous sections, we examine a scenario where all transactions expire after 10 rounds, and the primary builder exclusively bids on the top 3 private transactions.
In this section, we delve into various bid strategies, focusing on the number of transactions to bid and the timing of private transaction expiration. These bid strategies have the potential to influence users' score functions, thereby inducing significant changes in each player's rewards.
We begin by investigating bid strategies that involve bidding on 1, 2, 3, up to 20 private transactions, with the bid threshold being linked to the length of the bid transactions.
All other network settings remain consistent with those outlined in Section~\ref{s: result}.

\begin{figure}[htp]
  \centering
    \includegraphics[width=\linewidth]{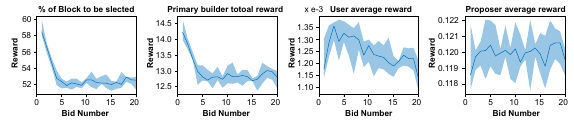}
  \caption{Different policy strategy to bid 1 to 20 transactions}
  \label{fig:bid_num}
\end{figure}

Figure~\ref{fig:bid_num} presents plots depicting the percentile of blocks built by the primary builder, the total rewards for the primary builder, the average rewards for users, and the average rewards for proposers.
A lower bid number simplifies the bidding process, resulting in peaks at $bid_{number} = 1$ in the first and second subfigures.
The third and fourth subfigures demonstrate that a bid strategy involving around 3 transactions yields the highest rewards for users and proposers, with trade-of between the probability to bid and profits per successful bid.

\subsubsection{Different valid time}
\label{s: bid ttl}
In the previous experiments, we assume the TTL for private transaction is 10 rounds that private transactions can be kept for at most 10 rounds to be processed.
In this section, we vary the TTL to investigate how the policy performs under different levels of time sensitivity, while keeping other settings the same as outlined in Section~\ref{s: result}.

\begin{figure}[htp]
  \centering
    \includegraphics[width=\linewidth]{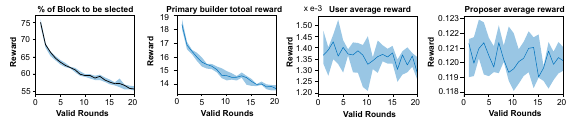}
  \caption{Different network setting for private transactions to keep valid}
  \label{fig:various_ttl}
\end{figure}

Figure~\ref{fig:various_ttl} illustrates that as the TTL increases, the likelihood of the primary builder's blocks being selected diminishes. Additionally, the average rewards for primary builders, users, and proposers collaborating with the primary builder experience a decline.
For $TTL = 1$, which is reflects analysis section, primary builder has the greatest advantage that it can used the best of the private transaction's expire time to insure them to be process in the next round, where users can have more confidence in the primary builder, and primary builder can be competitive compared with the secondary builder.
As TTL increases, the primary builder's advantage from pre-communicating with the proposer in the next round diminishes. 
However, there remains an advantage in the bidding process, with the advantages converging when TTL exceeds 10.

\subsubsection{Policy with 0 profit in bid}
\label{s: zero bid}
We discovered a viable range for the threshold $\rho$—greater than the default rate $r_2$ but less than an upper bound—enabling equilibrium among the three players.
Further exploration revealed an intriguing possibility: a bid rate lower than the builder's default rate can still yield greater rewards for the builder. 
In this scenario, we set the bid rate to 0 while retaining the default rate at the value detailed in Section~\ref{s: result} (2\%).
However, as a consequence of the primary builder receiving lower rewards compared to previous cases. 
This lack of distinction for users in their builder selection leads to equal probabilities for users to choose builders, maintaining consistency with other settings outlined in Section~\ref{s: result}.

\begin{figure}[!tbp]
  \centering
  \begin{subfigure}[b]{0.56\linewidth}
    \includegraphics[width=\linewidth]{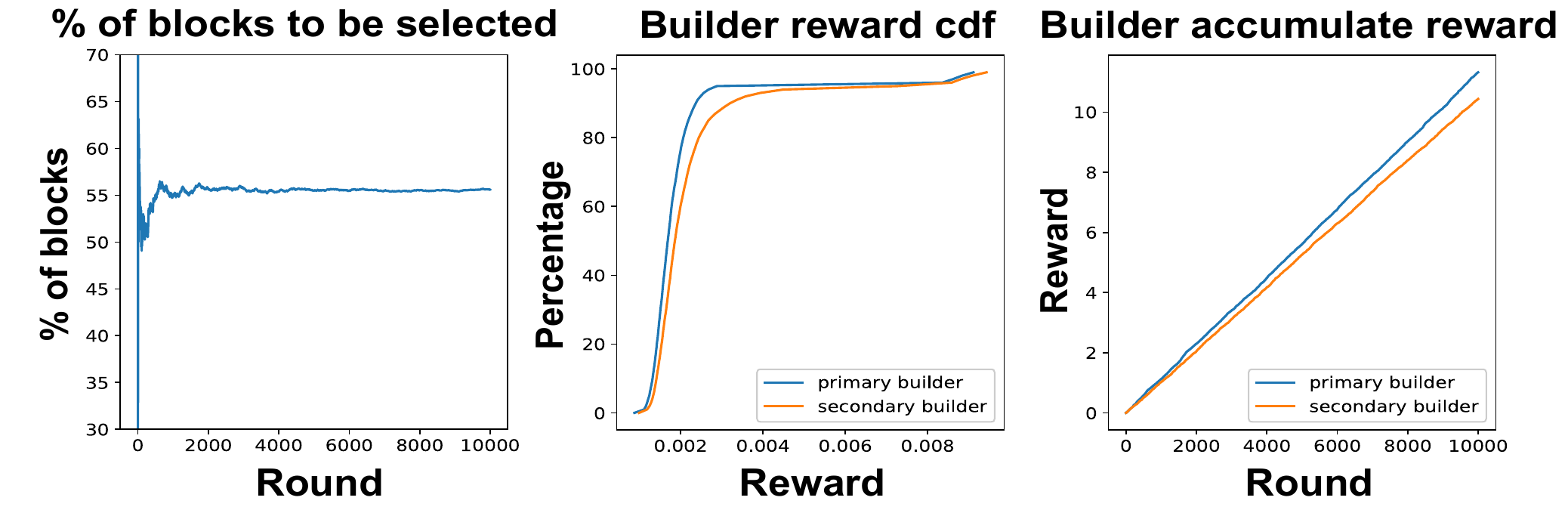}
    \caption{Special policy to set 0 rate in bidding}
    \label{fig:zero_bid}
  \end{subfigure}%
  \begin{subfigure}[b]{0.38\linewidth}
    \includegraphics[width=\linewidth]{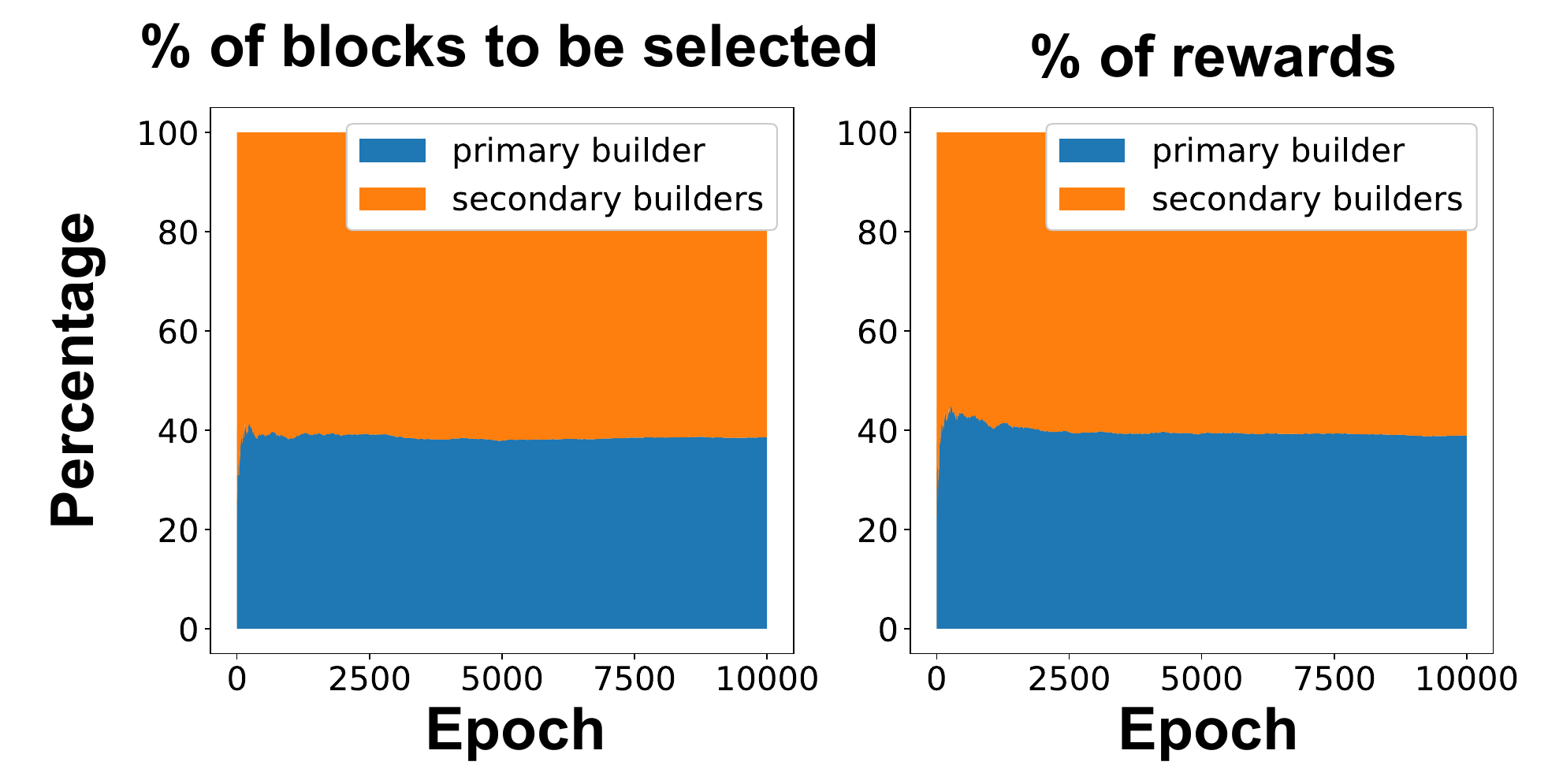}
    \caption{One primary builder and two secondary builder}
    \label{fig:3builders}
  \end{subfigure}%
  \caption{Special network settings}
\label{fig:special cases}
\end{figure}

With an optimal threshold in place, proposers show a distinct preference for the primary builder's candidate block, with greater probability to select primary builder's block as shown in the first plot.
Due to the 0 rate in the bids, primary builder has a lower rewards cdf compared with the secondary builder in the second plot.
Finally, as we combine the probability to be selected and the profits in each selection, primary builder could still achieve a greater accumulate rearwards as plot in the third plot.

\subsubsection{Multiple Secondary Builders}
\label{s:secondarys}
Previously, our discussions focused on the dynamic between a primary builder and a single secondary builder. 
In this subsection, we extend our analysis to encompass scenarios where a primary builder interacts with multiple secondary builders. 
The network settings remain consistent with those described earlier, with the introduction of additional secondary builders participating in the competition for candidate blocks. 
As depicted in Figure~\ref{fig:3builders}, the primary builder can secure more than one-third of the selected blocks and associated rewards, demonstrating a clear advantage over the secondary builders.

\section{Related work}
\subsection{Game theory in blockchain}
In decentralized blockchain networks, users have the opportunity to engage in various roles within the transaction processing and block generation processes. 
These diverse roles can be likened to players in a game, with participants applying game theory to maximize their rewards. 

In the context of POW networks, extensive game theory research has been conducted on block mining.
This research encompasses games related to transaction queues~\cite{li2018transaction}, miner mining~\cite{kiayias2016blockchain}, and mining pool selections~\cite{liu2018evolutionary}. 
The insights gained from these game studies provide players with strategies from learning the networks~\cite{dong2023graph} to optimize their rewards. Strategy design can be applied to various facets, including neighbor selection~\cite{mao2020perigee, xue2023goldfish}, neighbor degree management~\cite{math11234741}, direct miner connection~\cite{vedula2023cobalt} and data storage~\cite{zhang2023Kadabra}.

With Ethereum’s transition from POW to POS, there has been some recent work measuring the adoption and behavior of proposer-builder separation.
Some work finds that proposer may not receive the optimized value as expected~\cite{heimbach2023ethereum}, which fits Flashback's equilibrium.
Another work also points that the conflicts may benefit particular parties due to the implicit trust assumptions~\cite{wahrstatter2023time}.

Proof-of-stake introduces a novel consensus mechanism to the network, characterized by distinct roles in generating new blocks.
Pos network does not need miners any more, but the stake based validator decision also brings a game~\cite{sahin2022optimal, huang2021rich} on how to select the validators based on stakes.
~\cite{bhudia2023game} studies the extortion attacks with game among attacker, victim and validators.
Price of MEV~\cite{mazorra2022price} formalizes a game on transaction ordering mechanism based on priority gas auction and measures its Nash equilibrium, but they don't consider future block auctions.

\subsection{MEV auction platform/flashbots related work}
This paper studies MEV auction platform by applying a greedy strategy in selecting transactions on the validator side.
Flashback's private channel design is based on Flashbot~\cite{flashbots} and Flashback's validator's selection on candidate block is based on MEV-boost~\cite{mev-boost} with connections to multiple relays to search for maximum MEV.
But compared with them, we have simplifications on the block space ordering and functions of relay.

Multi-block MEV~\cite{jensen2023multi} secures MEV in k-consecutive blocks.
Aequitas~\cite{kelkar2020order}, Themis~\cite{kelkar2021themis} arise an ordering strategy with consideration of the received timestamps, which focus more on fairness in transaction ordering.

\subsection{Private transaction}
Prior research~\cite{qin2021quantifying, capponi2022evolution, piet2022extracting, weintraub2022flash, lyu2022empirical} has delved into the realm of private transactions, centering on the assessment of Miner Extractable Value (MEV) and blockchain extractable value (BEV) within the context of private transactions. Notably, \textit{Lyu et al.}~\cite{lyu2022empirical} compiled a year-long dataset of private transactions within PoW Ethereum, conducting an empirical analysis of their characteristics, economic implications (e.g., transaction costs and miner earnings), and security effects. In contrast to these efforts, our study endeavors to introduce a game-theoretic model aimed at redistributing profits among various parties (e.g., builders and validators) in PoS Ethereum. 

There are some work researching to order the transactions by the other rules.
For example, order-fairness~\cite{kelkar2020order} introduces a method to process the transaction based on their received timestamp, Wendy~\cite{kursawe2020wendy} presents a method to order based on the transaction's observation time by honest nodes, 
However, validators directly get rewards from the fees, there are less incentives for validators to apply the other rules.

\section{Conclusion}
This paper addresses the profit distribution in a blockchain ecosystem as a game involving users, builders, and proposers.
We introduce 'Flashback', a novel design aimed at enabling builders to communicate with upcoming proposers when they possess high-quality private transactions. 
The paper conducts a theoretical analysis of this game, establishing equilibrium conditions between primary builders and proposers for specific threshold values $\rho$ and auction rates $r_1$. 
The analysis lays out the conditions necessary for equilibrium between primary builders and proposers and demonstrates through experiments the existence of such equilibrium. 
The findings emphasize the advantages enjoyed by players who adopt the 'Flashback' policy, showcasing improved rewards compared to the default strategies currently in use within the blockchain ecosystem.


\bibliography{paper}

\appendix

\section{Validator reward when following Flashback policy}
\label{apx:vppolicy}
\begin{prop}[Validator reward when following Flashback policy]
\label{prop:validatorrewardpolicy}
\begin{align}
\mathbf{E}[V_p^\mathrm{policy}[t]] = (1-r_1) (\rho e^{-\rho/\mu_1} + \mu_1e^{-\rho/\mu_1}) + (1-r_2) e^{-\rho/\mu_1} 
(-\rho e^{-\rho/\mu_1} - \mu_1 e^{-\rho/\mu_1} + \mu_1 ) \notag \\
+ (1-r_2) \mu_3 e^{-\rho / \mu_1} + (1-e^{-\rho/\mu_1}) e^{-\rho / \mu_1}(1-r_2)\mu_2 + 
(1-e^{-\rho/\mu_1})(1-r_2)[ \mu_2 \notag \\ 
+ \frac{\mu_1}{(\mu_1 + \mu_2)} \rho e^{-\rho(\mu_1 + \mu_2)/(\mu_1 \mu_2) } 
+ \frac{\mu_1^2 \mu_2}{(\mu_1 + \mu_2)^2} e^{-\rho(\mu_1 + \mu_2)/(\mu_1 \mu_2) } - \frac{\mu_1^2 \mu_2}{(\mu_1 + \mu_2)^2} \notag \\
- \rho e^{-\rho/\mu_1} e^{-\rho/\mu_2} -  \mu_2 e^{-\rho/\mu_1} e^{-\rho/\mu_2}  ]
+ (1-e^{-\rho/\mu_1})e^{-\rho/\mu_1}(1-r_2) \mu_3 \notag \\
(1-e^{-\rho/\mu_1}) \mu_3 (1-r_2)
(1- \frac{\mu_1 }{\mu_1 + \mu_2} + \frac{\mu_1 }{\mu_1 + \mu_2} e^{- \rho(\mu_1 + \mu_2)/(\mu_1\mu_2)}  - e^{-\rho/\mu_1} e^{-\rho/\mu_2} ) \notag \\
+ (1 - r_2)(1 - e^{-\rho/\mu_1})( -(1 - e^{-\rho/\mu_2})\rho e^{-\rho/\mu_1}  - (1- e^{-\rho/\mu_2})\mu_1 e^{-\rho/\mu_1} -\frac{\mu_1}{\mu_1 + \mu_2}\rho e^{-\rho(\mu_1 + \mu_2)/(\mu_1\mu_2)} \notag \\
- \frac{\mu_1^2 \mu_2}{(\mu_1 + \mu_2)^2} e^{-
 \rho(\mu_1 + \mu_2)/(\mu_1 \mu_2)} + \frac{\mu_1^2 \mu_2}{(\mu_1 + \mu_2)^2}  
-\frac{\mu_1^2}{(\mu_1 + \mu_2)}  e^{-\rho(\mu_1 + \mu_2)/(\mu_1\mu_2)} + \frac{\mu_1^2}{(\mu_1 + \mu_2)}  ) \notag \\
+ (1-e^{-\rho/\mu_1})(1-r_2) \mu_3 ( \frac{\mu_1}{\mu_1 + \mu_2} (1 - e^{-\rho (\mu_1 + \mu_2)/(\mu_1 \mu_2)})  - e^{-\rho/\mu_1}(1-e^{-\rho/\mu_2})  ).  \label{eq:validatorrewpolicy}
\end{align}
\end{prop}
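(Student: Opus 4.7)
The plan is to expand $\mathbf{E}[V_p^\mathrm{policy}[t]]$ by conditioning on $R[t]$, applying linearity of expectation, and exploiting independence of $(X[t-1], X[t], X'[t], Y[t])$ across and within time slots. The key observation is that $R[t] = \mathbf{1}_{X[t-1] > \rho}$ and $R[t+1] = \mathbf{1}_{X[t] > \rho}$, so $R[t]$ is measurable with respect to $X[t-1]$ only and $R[t+1]$ with respect to $X[t]$ only. Since $X[t-1]$ is independent of $(X[t], X'[t], Y[t])$, the indicators $\mathbf{1}_{R[t]=1}$ and $\mathbf{1}_{R[t]=0}$ factor out, producing the weights $e^{-\rho/\mu_1}$ and $1-e^{-\rho/\mu_1}$ that gate the two halves of~\eqref{eq:validatorrewpolicy}.

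First I would handle the $R[t]=1$ branch by splitting into the three summands of the inner bracket. For the reserved payoff I would use $\mathbf{E}[X[t-1]\mathbf{1}_{X[t-1]>\rho}] = (\rho+\mu_1)e^{-\rho/\mu_1}$, which yields the leading $(1-r_1)(\rho e^{-\rho/\mu_1} + \mu_1 e^{-\rho/\mu_1})$ term. For the non-reserved private contribution I would compute $\mathbf{E}[X[t]\mathbf{1}_{X[t]\le\rho}] = \mu_1 - (\rho+\mu_1)e^{-\rho/\mu_1}$ and multiply by $e^{-\rho/\mu_1}(1-r_2)$, reproducing the second group. The public transaction term gives $(1-r_2)\mu_3 e^{-\rho/\mu_1}$ directly.

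Next I would attack the $R[t]=0$ branch. Since $(1-r_2)Y[t]$ appears in both arguments of the $\max$, I would factor it out to write $\max\{A,B\} = (1-r_2)Y[t] + (1-r_2)\max\{X'[t],\, X[t]\mathbf{1}_{X[t]\le\rho}\}$. The $Y[t]$ piece immediately yields $(1-e^{-\rho/\mu_1})(1-r_2)\mu_3$. For the remaining maximum, I would further condition on $\{X[t]>\rho\}$, in which case the second argument vanishes and the maximum collapses to $X'[t]$, contributing $(1-e^{-\rho/\mu_1})e^{-\rho/\mu_1}(1-r_2)\mu_2$. On the complementary event $\{X[t]\le\rho\}$ I would use the identity $\max\{U,V\}=U+V-\min\{U,V\}$ together with the fact that for independent exponentials $\min\{X[t],X'[t]\}\sim\exp((\mu_1+\mu_2)/(\mu_1\mu_2))$; the truncation to $X[t]\le\rho$ then naturally produces the composite exponent $e^{-\rho(\mu_1+\mu_2)/(\mu_1\mu_2)}$ and the prefactors $\mu_1/(\mu_1+\mu_2)$ and $\mu_1^2\mu_2/(\mu_1+\mu_2)^2$ visible in~\eqref{eq:validatorrewpolicy}. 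The final block of terms comes from explicitly integrating $\mathbf{E}[X'[t]\mathbf{1}_{X'[t]\le X[t]}\mathbf{1}_{X[t]\le\rho}]$ and its symmetric counterpart over the joint density $(\mu_1\mu_2)^{-1}e^{-x/\mu_1-x'/\mu_2}$.

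The main obstacle is not any single computation but the bookkeeping needed to align the terms with the stated closed form. Each truncated expectation of the form $\int_0^\rho\!\int_0^x(\cdot)\,dx'\,dx$ or $\int_0^\rho\!\int_x^\infty(\cdot)\,dx'\,dx$ expands into several exponential terms involving the exponents $\rho/\mu_1$, $\rho/\mu_2$, and $\rho(\mu_1+\mu_2)/(\mu_1\mu_2)$, and the challenge is to group them in the precise order presented in~\eqref{eq:validatorrewpolicy}. I would therefore carry out the integrals separately, tabulate the resulting monomials in $\rho, e^{-\rho/\mu_1}, e^{-\rho/\mu_2}$, and finally match them line by line, rather than attempt a single sweeping computation.
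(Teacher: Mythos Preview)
Your plan is correct and closely tracks the paper's own argument: both condition on $R[t]$ via $R[t]=\mathbf{1}_{X[t-1]>\rho}$, exploit independence to factor out $e^{-\rho/\mu_1}$ and $1-e^{-\rho/\mu_1}$, and reduce everything to truncated moments of exponentials. The one tactical difference is in the $R[t]=0$ branch: you pull $(1-r_2)Y[t]$ outside the $\max$ and then invoke $\max\{U,V\}=U+V-\min\{U,V\}$, whereas the paper splits the $\max$ directly via the indicators $\mathbf{1}_{X'[t]>(1-R[t+1])X[t]}$ and its complement, sub-cases further on $R[t+1]\in\{0,1\}$, and integrates each of the resulting pieces separately. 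The paper's decomposition has the bookkeeping advantage that every integral lands on exactly one block of~\eqref{eq:validatorrewpolicy} as written---in particular the three separate $\mu_3$-groups are precisely the $Y[t]$ contributions on the three indicator sub-events. Your route is slightly cleaner but collapses the $Y[t]$ contribution to the single lump $(1-e^{-\rho/\mu_1})(1-r_2)\mu_3$, so the line-by-line matching you anticipate will require the extra check that the three stated $\mu_3$-blocks sum to this (they do). Either way the computation goes through.
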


\begin{proof}
We have 
\begin{align}
\mathbf{E}[\mathbf{1}_{R[t] = 1}(1-r_1) X[t-1]] 
=
(1-r_1) \mathbf{E}[\mathbf{1}_{X[t-1] > \rho} X[t-1]] \notag \\
=
(1- r_1) \int_\rho^\infty x/\mu_1 e^{-x/\mu_1} dx = (1-r_1) [-xe^{-x/\mu_1} - \mu_1 e^{-x/\mu_1}]_\rho^\infty \notag \\
=
(1-r_1) (\rho e^{-\rho/\mu_1} + \mu_1e^{-\rho/\mu_1}) \\
\mathbf{E}[\mathbf{1}_{R[t] = 1}(1 - R[t+1]) (1-r_2) X[t]] = (1-r_2) \mathbf{E}[\mathbf{1}_{R[t]=1} \mathbf{1}_{R[t+1]=0}X[t]] \notag \\
= 
(1-r_2) \mathbf{E}[\mathbf{1}_{X[t-1]>\rho} \mathbf{1}_{X[t] < \rho} X[t]]
=
(1-r_2) e^{-\rho/\mu_1} \mathbf{E}[ \mathbf{1}_{X[t] < \rho} X[t]] \notag \\
= 
(1-r_2) e^{-\rho/\mu_1} 
\int_0^\rho x /\mu_1 e^{-x/\mu_1} dx 
=
(1-r_2) e^{-\rho/\mu_1} 
[-x e^{-x/\mu_1} - \mu_1 e^{-x/\mu_1}]_0^\rho \notag \\
=
(1-r_2) e^{-\rho/\mu_1} 
(-\rho e^{-\rho/\mu_1} - \mu_1 e^{-\rho/\mu_1} + \mu_1 ) \\
\mathbf{E}[\mathbf{1}_{R[t] = 1} (1-r_2) Y[t]] = (1-r_2) \mu_3 \mathbf{E}[\mathbf{1}_{X[t-1]>\rho}]
=
(1-r_2) \mu_3 e^{-\rho / \mu_1}
\end{align}
Next, 
\begin{align}
\mathbf{1}_{R[t]=0}\max \{  (1-r_2) X'[t] + (1-r_2) Y[t], (1-R[t+1])(1 - r_2)X[t] \notag \\
+ (1-r_2)Y[t] \} \notag \\
= 
\mathbf{1}_{R[t]=0} \mathbf{1}_{  (1-r_2) X'[t] + (1-r_2) Y[t] > (1-R[t+1])(1 - r_2)X[t] + (1-r_2)Y[t]} ((1-r_2) X'[t] \notag \\
+ (1-r_2) Y[t]) \notag \\
+ 
\mathbf{1}_{R[t]=0} \mathbf{1}_{  (1-r_2) X'[t] + (1-r_2) Y[t] < (1-R[t+1])(1 - r_2)X[t] + (1-r_2)Y[t]} ((1-R[t+1])(1 - r_2)X[t] \notag \\
+ (1-r_2)Y[t]). \notag 
\end{align}
Therefore,
\begin{align}
\mathbf{E}[\mathbf{1}_{R[t]=0} \mathbf{1}_{  (1-r_2) X'[t] + (1-r_2) Y[t] > (1-R[t+1])(1 - r_2)X[t] + (1-r_2)Y[t]} ((1-r_2) X'[t] + (1-r_2) Y[t])] \notag \\
=
\mathbf{E}[\mathbf{1}_{R[t]=0} \mathbf{1}_{  X'[t]  > (1-R[t+1])X[t] } ((1-r_2) X'[t] + (1-r_2) Y[t])] \notag \\
=
\mathbf{E}[\mathbf{1}_{R[t]=0} 
(\mathbf{1}_{R[t+1]=1} + \mathbf{1}_{R[t+1]=0} \mathbf{1}_{  X'[t]  > X[t] } ) ((1-r_2) X'[t] + (1-r_2) Y[t])] \notag \\
=
\mathbf{E}[\mathbf{1}_{R[t]=0} 
(\mathbf{1}_{R[t+1]=1}(1-r_2)X'[t] + \mathbf{1}_{R[t+1]=0} \mathbf{1}_{  X'[t]  > X[t] }(1-r_2) X'[t] + 
\mathbf{1}_{R[t+1]=1}(1-r_2)Y[t]  \notag \\
+ \mathbf{1}_{R[t+1]=0} \mathbf{1}_{  X'[t]  > X[t] }(1-r_2)Y[t]) ] \notag 
\end{align}
We evaluate each of the four terms in the above summation in order below.
\begin{align}
\mathbf{E}[\mathbf{1}_{R[t]=0}\mathbf{1}_{R[t+1]=1}(1-r_2)X'[t]] 
= 
\mathbf{E}[\mathbf{1}_{X[t-1]<\rho}\mathbf{1}_{X[t]>\rho }(1-r_2)X'[t]] \notag \\
=
(1-e^{-\rho/\mu_1}) e^{-\rho / \mu_1}(1-r_2)\mu_2 \label{eq:arbit1} \\
\mathbf{E}[\mathbf{1}_{R[t]=0}\mathbf{1}_{R[t+1]=0}\mathbf{1}_{X'[t]>X[t]}(1-r_2)X'[t]]
=
\mathbf{E}[\mathbf{1}_{X[t-1]<\rho}\mathbf{1}_{X[t]<\rho}\mathbf{1}_{X'[t]>X[t]}(1-r_2)X'[t]] \notag \\
=
(1-e^{-\rho/\mu_1})
\mathbf{E}[\mathbf{1}_{X[t]<\rho}\mathbf{1}_{X'[t]>X[t]}(1-r_2)X'[t]] \notag 
\end{align}
\begin{align}
=
(1-e^{-\rho/\mu_1})
\mathbf{E}[ \mathbf{E}[ \mathbf{1}_{X[t]<\rho}\mathbf{1}_{X'[t]>X[t]}(1-r_2)X'[t] | X'[t]]] \notag \\
= 
(1-e^{-\rho/\mu_1})
\mathbf{E}[ (1-r_2)X'[t] \mathbf{E}[ \mathbf{1}_{X[t]<\rho}\mathbf{1}_{X'[t]>X[t]} | X'[t]]] \notag \\
=
(1-e^{-\rho/\mu_1})
\mathbf{E}[ (1-r_2)X'[t] (1-e^{-\min(X'[t],\rho)/\mu_1})] \notag \\
=
(1-e^{-\rho/\mu_1})(1-r_2) \int_0^\infty x (1-e^{-\min(x,\rho)/\mu_1}) /\mu_2 e^{-x/\mu_2} dx \notag 
\end{align}
\begin{align}
=
(1-e^{-\rho/\mu_1})(1-r_2)[\int_0^\infty x/\mu_2 e^{-x/\mu_2} dx - \int_0^\infty x/\mu_2 e^{-\min(x,\rho)/\mu_1}e^{-x/\mu_2} dx ] \notag \\
=
(1-e^{-\rho/\mu_1})(1-r_2)[ [-x e^{-x/\mu_2} - \mu_2 e^{-x/\mu_2}]_0^\infty - \int_0^\rho x/\mu_2 e^{-x/\mu_1}e^{-x/\mu_2} dx \notag \\
- \int_\rho^\infty x/\mu_2 e^{-\rho/\mu_1} e^{-x/\mu_2} dx] \notag \\
=
(1-e^{-\rho/\mu_1})(1-r_2)[ \mu_2 - \int_0^\rho x/\mu_2 e^{-x(\mu_1 + \mu_2)/(\mu_1 \mu_2) } dx - \frac{e^{-\rho/\mu_1}}{\mu_2} \int_\rho^\infty x  e^{-x/\mu_2} dx] \notag 
\end{align}
\begin{align}
= 
(1-e^{-\rho/\mu_1})(1-r_2)[ \mu_2 - [ -\frac{\mu_1}{(\mu_1 + \mu_2)} x e^{-x(\mu_1 + \mu_2)/(\mu_1 \mu_2) } - \frac{\mu_1^2 \mu_2}{(\mu_1 + \mu_2)^2} e^{-x(\mu_1 + \mu_2)/(\mu_1 \mu_2) }]_0^\rho \notag \\
- \frac{e^{-\rho/\mu_1}}{\mu_2} [- \mu_2 x  e^{-x/\mu_2} - \mu_2^2 e^{-x/\mu_2} ]_\rho^\infty ] \notag \\
= 
(1-e^{-\rho/\mu_1})(1-r_2)[ \mu_2 + \frac{\mu_1}{(\mu_1 + \mu_2)} \rho e^{-\rho(\mu_1 + \mu_2)/(\mu_1 \mu_2) } + \frac{\mu_1^2 \mu_2}{(\mu_1 + \mu_2)^2} e^{-\rho(\mu_1 + \mu_2)/(\mu_1 \mu_2) } - \frac{\mu_1^2 \mu_2}{(\mu_1 + \mu_2)^2} \notag \\
- \frac{e^{-\rho/\mu_1}}{\mu_2} [\mu_2 \rho  e^{-\rho/\mu_2} + \mu_2^2 e^{-\rho/\mu_2} ] ] \notag 
\end{align}
\begin{align}
=  
(1-e^{-\rho/\mu_1})(1-r_2)[ \mu_2 + \frac{\mu_1}{(\mu_1 + \mu_2)} \rho e^{-\rho(\mu_1 + \mu_2)/(\mu_1 \mu_2) } + \frac{\mu_1^2 \mu_2}{(\mu_1 + \mu_2)^2} e^{-\rho(\mu_1 + \mu_2)/(\mu_1 \mu_2) } \notag \\ - \frac{\mu_1^2 \mu_2}{(\mu_1 + \mu_2)^2} 
- \rho e^{-\rho/\mu_1} e^{-\rho/\mu_2} -  \mu_2 e^{-\rho/\mu_1} e^{-\rho/\mu_2}  ] \\
\mathbf{E}[\mathbf{1}_{R[t]=0}\mathbf{1}_{R[t+1]=1}(1-r_2)Y[t]]
=
\mathbf{E}[\mathbf{1}_{X[t-1]<\rho}\mathbf{1}_{X[t]>\rho}(1-r_2)Y[t]]
= 
(1-e^{-\rho/\mu_1})e^{-\rho/\mu_1}(1-r_2) \mu_3 \\
\mathbf{E}[\mathbf{1}_{R[t]=0}\mathbf{1}_{R[t+1]=0}\mathbf{1}_{X'[t]>X[t]}(1-r_2)Y[t]]
=
\mathbf{E}[\mathbf{1}_{X[t-1]<\rho}\mathbf{1}_{X[t]<\rho}\mathbf{1}_{X'[t]>X[t]}(1-r_2)Y[t]] \notag 
\end{align}
\begin{align}
= 
(1-e^{-\rho/\mu_1}) \mu_3 (1-r_2)
\mathbf{E}[\mathbf{1}_{X[t]<\rho}\mathbf{1}_{X'[t]>X[t]}] \notag \\
=
(1-e^{-\rho/\mu_1}) \mu_3 (1-r_2)
\mathbf{E}[ \mathbf{E}[ \mathbf{1}_{X[t]<\min(\rho, X'[t])} | X'[t]]] \notag \\
=
(1-e^{-\rho/\mu_1}) \mu_3 (1-r_2)
\mathbf{E}[1-e^{-\min(\rho, X'[t])/\mu_1} ] \notag \\
=
(1-e^{-\rho/\mu_1}) \mu_3 (1-r_2)
(1-\int_0^\infty e^{-\min(\rho, x)/\mu_1}/\mu_2 e^{-x/\mu_2} dx ) \notag 
\end{align}
\begin{align}
= 
(1-e^{-\rho/\mu_1}) \mu_3 (1-r_2)
(1-\int_0^\rho e^{- x/\mu_1}/\mu_2 e^{-x/\mu_2} dx - \int_\rho^\infty e^{-\rho/\mu_1}/\mu_2 e^{-x/\mu_2} dx ) \notag \\
=
(1-e^{-\rho/\mu_1}) \mu_3 (1-r_2)
(1- \frac{1}{\mu_2} \int_0^\rho e^{- x(\mu_1 + \mu_2)/(\mu_1\mu_2)}  dx - e^{-\rho/\mu_1} e^{-\rho/\mu_2} ) \notag \\
=
(1-e^{-\rho/\mu_1}) \mu_3 (1-r_2)
(1- \frac{1}{\mu_2} [-\frac{\mu_1 \mu_2}{\mu_1 + \mu_2} e^{- x(\mu_1 + \mu_2)/(\mu_1\mu_2)}]_0^\rho  - e^{-\rho/\mu_1} e^{-\rho/\mu_2} ) \notag \\
=
(1-e^{-\rho/\mu_1}) \mu_3 (1-r_2)
(1- \frac{1}{\mu_2} (\frac{\mu_1 \mu_2}{\mu_1 + \mu_2} -\frac{\mu_1 \mu_2}{\mu_1 + \mu_2} e^{- \rho(\mu_1 + \mu_2)/(\mu_1\mu_2)})  - e^{-\rho/\mu_1} e^{-\rho/\mu_2} ) \notag \\
=
(1-e^{-\rho/\mu_1}) \mu_3 (1-r_2)
(1- \frac{\mu_1 }{\mu_1 + \mu_2} + \frac{\mu_1 }{\mu_1 + \mu_2} e^{- \rho(\mu_1 + \mu_2)/(\mu_1\mu_2)}  - e^{-\rho/\mu_1} e^{-\rho/\mu_2} )
\end{align}
Next we have 
\begin{align}
\mathbf{E}[\mathbf{1}_{R[t]=0} \mathbf{1}_{  (1-r_2) X'[t] + (1-r_2) Y[t] < (1-R[t+1])(1 - r_2)X[t] + (1-r_2)Y[t]} ((1-R[t+1])(1 - r_2)X[t] \notag \\
+ (1-r_2)Y[t])] \notag \\
=
\mathbf{E}[\mathbf{1}_{R[t]=0} \mathbf{1}_{  X'[t] < (1-R[t+1])X[t] } ((1-R[t+1])(1 - r_2)X[t] + (1-r_2)Y[t])] \notag \\
=
\mathbf{E}[\mathbf{1}_{R[t]=0} 
\mathbf{1}_{R[t+1]=0}\mathbf{1}_{  X'[t] < X[t] } ((1 - r_2)X[t] + (1-r_2)Y[t])]. \notag 
\end{align}
We evaluate each of the two terms in that above summation below. 
\begin{align}
\mathbf{E}[\mathbf{1}_{R[t]=0} 
\mathbf{1}_{R[t+1]=0}\mathbf{1}_{  X'[t] < X[t] } (1 - r_2)X[t]] 
=
\mathbf{E}[\mathbf{1}_{X[t-1]<\rho} 
\mathbf{1}_{X[t]<\rho}\mathbf{1}_{  X'[t] < X[t] } (1 - r_2)X[t]] \notag \\
=
(1 - r_2)(1 - e^{-\rho/\mu_1}) \mathbf{E}[ 
\mathbf{1}_{X[t]<\rho}\mathbf{1}_{  X'[t] < X[t] } X[t]] \notag \\
=
(1 - r_2)(1 - e^{-\rho/\mu_1}) \mathbf{E}[\mathbf{1}_{X'[t] < \rho} \mathbf{1}_{  X'[t] < X[t] < \rho } X[t]] \notag \\
=
(1 - r_2)(1 - e^{-\rho/\mu_1}) \mathbf{E}[\mathbf{E}[\mathbf{1}_{X'[t] < \rho} \mathbf{1}_{  X'[t] < X[t] < \rho } X[t] | X'[t]]] \notag \\
=
(1 - r_2)(1 - e^{-\rho/\mu_1}) \mathbf{E}[\mathbf{1}_{X'[t] < \rho} \int_{X'[t]}^\rho x / \mu_1 e^{-x/\mu_1} dx  ] \notag \\
=
(1 - r_2)(1 - e^{-\rho/\mu_1}) \mathbf{E}[\mathbf{1}_{X'[t] < \rho} [ -x e^{-x/\mu_1} - \mu_1 e^{-x/\mu_1}]_{X'[t]}^\rho  ] \notag \\
=
(1 - r_2)(1 - e^{-\rho/\mu_1}) \mathbf{E}[\mathbf{1}_{X'[t] < \rho} [ -\rho e^{-\rho/\mu_1} - \mu_1 e^{-\rho/\mu_1} + X'[t] e^{-X'[t]/\mu_1} + \mu_1 e^{-X'[t]/\mu_1}]  ] \notag \\
=
(1 - r_2)(1 - e^{-\rho/\mu_1}) \mathbf{E}[ -\mathbf{1}_{X'[t] < \rho} \rho e^{-\rho/\mu_1} - \mathbf{1}_{X'[t] < \rho} \mu_1 e^{-\rho/\mu_1} + \mathbf{1}_{X'[t] < \rho} X'[t] e^{-X'[t]/\mu_1} \notag \\
+ \mathbf{1}_{X'[t] < \rho} \mu_1 e^{-X'[t]/\mu_1}]  \notag \\
=
(1 - r_2)(1 - e^{-\rho/\mu_1})( -(1 - e^{-\rho/\mu_2})\rho e^{-\rho/\mu_1}  - (1- e^{-\rho/\mu_2})\mu_1 e^{-\rho/\mu_1} + \int_0^\rho x e^{-x/\mu_1}/\mu_2 e^{-x/\mu_2} dx \notag \\
+\int_0^\rho  \mu_1 e^{-x/\mu_1} /\mu_2 e^{-x/\mu_2}dx) \notag \\
= 
(1 - r_2)(1 - e^{-\rho/\mu_1})( -(1 - e^{-\rho/\mu_2})\rho e^{-\rho/\mu_1}  - (1- e^{-\rho/\mu_2})\mu_1 e^{-\rho/\mu_1} + \int_0^\rho x e^{-x(\mu_1 + \mu_2)/(\mu_1\mu_2)}/\mu_2 dx \notag \\
+\int_0^\rho  \mu_1 e^{-x(\mu_1 + \mu_2)/(\mu_1\mu_2)} /\mu_2 dx) \notag \\
= 
(1 - r_2)(1 - e^{-\rho/\mu_1})( -(1 - e^{-\rho/\mu_2})\rho e^{-\rho/\mu_1}  - (1- e^{-\rho/\mu_2})\mu_1 e^{-\rho/\mu_1} + [-\frac{\mu_1}{\mu_1 + \mu_2}x e^{-x(\mu_1 + \mu_2)/(\mu_1\mu_2)} \notag \\
 - \frac{\mu_1^2 \mu_2}{(\mu_1 + \mu_2)^2} e^{-x(\mu_1 + \mu_2)/(\mu_1 \mu_2)}]_0^\rho 
+ [-\frac{\mu_1^2}{(\mu_1 + \mu_2)}  e^{-x(\mu_1 + \mu_2)/(\mu_1\mu_2)}  ]_0^\rho) \notag \\
=
(1 - r_2)(1 - e^{-\rho/\mu_1})( -(1 - e^{-\rho/\mu_2})\rho e^{-\rho/\mu_1}  - (1- e^{-\rho/\mu_2})\mu_1 e^{-\rho/\mu_1} -\frac{\mu_1}{\mu_1 + \mu_2}\rho e^{-\rho(\mu_1 + \mu_2)/(\mu_1\mu_2)} \notag \\
 - \frac{\mu_1^2 \mu_2}{(\mu_1 + \mu_2)^2} e^{-
 \rho(\mu_1 + \mu_2)/(\mu_1 \mu_2)} + \frac{\mu_1^2 \mu_2}{(\mu_1 + \mu_2)^2}  
-\frac{\mu_1^2}{(\mu_1 + \mu_2)}  e^{-\rho(\mu_1 + \mu_2)/(\mu_1\mu_2)} + \frac{\mu_1^2}{(\mu_1 + \mu_2)}  )
\end{align}
Next, 
\begin{align}
\mathbf{E}[\mathbf{1}_{R[t]=0}\mathbf{1}_{R[t+1]=0}\mathbf{1}_{X'[t] < X[t]}(1-r_2)Y[t]]
=
\mathbf{E}[\mathbf{1}_{X[t-1]<\rho}\mathbf{1}_{X[t]<\rho }\mathbf{1}_{X'[t] < X[t]}(1-r_2)Y[t]] \notag \\
=
(1-e^{-\rho/\mu_1})(1-r_2) \mu_3
\mathbf{E}[\mathbf{1}_{X[t]<\rho }\mathbf{1}_{X'[t] < X[t]}] \notag \\
=
(1-e^{-\rho/\mu_1})(1-r_2) \mu_3
\mathbf{E}[\mathbf{1}_{X'[t] < \rho }\mathbf{1}_{X'[t] < X[t]<\rho }] \notag \\
=
(1-e^{-\rho/\mu_1})(1-r_2) \mu_3
\mathbf{E}[\mathbf{E}[\mathbf{1}_{X'[t] < \rho }\mathbf{1}_{X'[t] < X[t]<\rho } | X'[t]]] \notag \\
=
(1-e^{-\rho/\mu_1})(1-r_2) \mu_3
\mathbf{E}[\mathbf{1}_{X'[t] < \rho } (e^{-X'[t]/\mu_1} - e^{-\rho/\mu_1}) ] \notag \\
=
(1-e^{-\rho/\mu_1})(1-r_2) \mu_3 \int_0^\rho 
 (e^{-x/\mu_1} - e^{-\rho/\mu_1})/\mu_2 e^{-x/\mu_2} dx \notag \\
 =
 (1-e^{-\rho/\mu_1})(1-r_2) \mu_3 ( \int_0^\rho 
 e^{-x(\mu_1 + \mu_2)/(\mu_1\mu_2)}/\mu_2 dx - \int_0^\rho e^{-\rho/\mu_1}/\mu_2 e^{-x/\mu_2} dx ) \notag \\
 =
 (1-e^{-\rho/\mu_1})(1-r_2) \mu_3 ( \frac{\mu_1}{\mu_1 + \mu_2} \int_0^\rho 
 \frac{\mu_1 + \mu_2}{\mu_1 \mu_2} e^{-x(\mu_1 + \mu_2)/(\mu_1\mu_2)} dx - e^{-\rho/\mu_1}(1-e^{-\rho/\mu_2})  ) \notag \\
 =
 (1-e^{-\rho/\mu_1})(1-r_2) \mu_3 ( \frac{\mu_1}{\mu_1 + \mu_2} (1 - e^{-\rho (\mu_1 + \mu_2)/(\mu_1 \mu_2)})  - e^{-\rho/\mu_1}(1-e^{-\rho/\mu_2})  ).  \label{eq:arbit2}
\end{align}
\end{proof}

\section{Validator reward when following default policy}
\label{apx:vpdefault}
\begin{prop}[Validator reward when following default policy]
\label{prop:validatordeviation}
\begin{align}
\mathbf{E}[V_p^\mathrm{default}[t]] = (
(1-e^{-\rho/\mu_1}) e^{-\rho / \mu_1}(1-r_2)\mu_2 + 
(1-e^{-\rho/\mu_1})(1-r_2)[ \mu_2 + \frac{\mu_1}{(\mu_1 + \mu_2)} \rho e^{-\rho(\mu_1 + \mu_2)/(\mu_1 \mu_2) } \notag \\
+ \frac{\mu_1^2 \mu_2}{(\mu_1 + \mu_2)^2} e^{-\rho(\mu_1 + \mu_2)/(\mu_1 \mu_2) } - \frac{\mu_1^2 \mu_2}{(\mu_1 + \mu_2)^2} 
- \rho e^{-\rho/\mu_1} e^{-\rho/\mu_2} -  \mu_2 e^{-\rho/\mu_1} e^{-\rho/\mu_2}  ] 
+ (1-e^{-\rho/\mu_1}) \notag \\ 
e^{-\rho/\mu_1}(1-r_2) \mu_3 
(1-e^{-\rho/\mu_1}) \mu_3 (1-r_2)
(1- \frac{\mu_1 }{\mu_1 + \mu_2} + \frac{\mu_1 }{\mu_1 + \mu_2} e^{- \rho(\mu_1 + \mu_2)/(\mu_1\mu_2)}  - e^{-\rho/\mu_1} e^{-\rho/\mu_2} ) \notag \\
+ (1 - r_2)(1 - e^{-\rho/\mu_1})( -(1 - e^{-\rho/\mu_2})\rho e^{-\rho/\mu_1}  - (1- e^{-\rho/\mu_2})\mu_1 e^{-\rho/\mu_1} -\frac{\mu_1}{\mu_1 + \mu_2}\rho e^{-\rho(\mu_1 + \mu_2)/(\mu_1\mu_2)} \notag \\
- \frac{\mu_1^2 \mu_2}{(\mu_1 + \mu_2)^2} e^{-
 \rho(\mu_1 + \mu_2)/(\mu_1 \mu_2)} + \frac{\mu_1^2 \mu_2}{(\mu_1 + \mu_2)^2}  
-\frac{\mu_1^2}{(\mu_1 + \mu_2)}  e^{-\rho(\mu_1 + \mu_2)/(\mu_1\mu_2)} + \frac{\mu_1^2}{(\mu_1 + \mu_2)}  ) \notag \\
+ (1-e^{-\rho/\mu_1})(1-r_2) \mu_3 ( \frac{\mu_1}{\mu_1 + \mu_2} (1 - e^{-\rho (\mu_1 + \mu_2)/(\mu_1 \mu_2)})  - e^{-\rho/\mu_1}(1-e^{-\rho/\mu_2})  ))/(1-e^{-\rho/\mu_1}).  \label{eq:validrewdeviation}
\end{align}
\end{prop}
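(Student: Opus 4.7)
The plan is to reduce the expectation to the $\mathbf{1}_{R[t]=0}$ portion of $\mathbf{E}[V_p^\mathrm{policy}[t]]$ that has already been computed in Appendix~\ref{apx:vppolicy}. The key observation is that a validator following the default policy never accepts a Flashback offer at round $t-1$, so deterministically $R[t]=0$. Hence
\begin{align}
V_p^\mathrm{default}[t] = \max\{(1-r_2)X'[t] + (1-r_2)Y[t],\ (1-R[t+1])(1-r_2)X[t] + (1-r_2)Y[t]\},
\end{align}
and what I need is just the expectation of this maximum, with no indicator attached in front.

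Next I would use a simple conditioning argument to link this quantity to an already-evaluated expression. Under the stated assumption that the validators at rounds $t-1$ and $t+1$ follow the proposed policy, $R[t+1] = \mathbf{1}_{X[t]>\rho}$, and in a hypothetical world where the validator at $t$ \emph{also} followed the proposed policy, $R[t] = \mathbf{1}_{X[t-1]>\rho}$. Crucially, $X[t-1]$ is independent of the tuple $(X[t], X'[t], Y[t], R[t+1])$ that appears inside the maximum, so
\begin{align}
\mathbf{E}[\mathbf{1}_{R[t]=0}\,\max\{\cdots\}] = \Pr[X[t-1]<\rho]\cdot \mathbf{E}[\max\{\cdots\}] = (1-e^{-\rho/\mu_1})\,\mathbf{E}[V_p^\mathrm{default}[t]].
\end{align}
Consequently $\mathbf{E}[V_p^\mathrm{default}[t]]$ equals the collected $\mathbf{1}_{R[t]=0}$ terms already evaluated in the chain~\eqref{eq:arbit1}--\eqref{eq:arbit2} of Proposition~\ref{prop:validatorrewardpolicy}, divided by $(1-e^{-\rho/\mu_1})$.

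To finish, I would simply substitute. The maximum splits into two subcases according to which argument is larger; each subcase further splits on $R[t+1]\in\{0,1\}$, producing exactly the four terms whose integrals against the densities of $X[t],X'[t],Y[t]$ were carried out in Appendix~\ref{apx:vppolicy}. No new integral is needed. The only real obstacle is bookkeeping: tracking the common prefactor $(1-e^{-\rho/\mu_1})$ that appears in each of those four terms and verifying that the final division cancels it cleanly so that the expression matches~\eqref{eq:validrewdeviation} verbatim.
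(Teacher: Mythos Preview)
Your proposal is correct and follows essentially the same approach as the paper: the paper's proof simply writes $\mathbf{E}[V_p^\mathrm{default}[t]] = \mathbf{E}[\mathbf{1}_{R[t]=0}\max\{\cdots\}]/(1-e^{-\rho/\mu_1})$ and observes that the numerator was already computed in equations~\eqref{eq:arbit1}--\eqref{eq:arbit2}. You articulate the independence of $X[t-1]$ from $(X[t],X'[t],Y[t],R[t+1])$ that justifies this division, which the paper leaves implicit; beyond that minor elaboration the arguments coincide.
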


\begin{proof}
To evaluate $V_p^\mathrm{default}[t]$ we note that   
\begin{align}
\mathbf{E}[V_p^\mathrm{default}[t]] = \mathbf{E}[ \mathbf{1}_{R[t]=0} \max \{  (1-r_2) X'[t] + (1-r_2) Y[t], (1-R[t+1])(1 - r_2)X[t] \notag \\
+ (1-r_2)Y[t] \}]/(1 - e^{-\rho/\mu_1}). \notag  
\end{align}
However, the numerator of the above has previously been evaluated in equations~\eqref{eq:arbit1}--\eqref{eq:arbit2}. 
\end{proof}

\section{Primary builder reward}
\label{apx:vprimary}
\begin{prop}[Primary builder reward]
\label{prop:primarybuilder}
\begin{align}
\mathbf{E}[V_\mathrm{primary}[t]] = r_1 (\rho e^{-\rho/\mu_1} + \mu_1 e^{-\rho/\mu_1})  
+ r_2 e^{-\rho / \mu_1} (\mu_1 -\rho e^{-\rho/\mu_1} - \mu_1 e^{-\rho / \mu_1})
+ r_2 \mu_3 e^{-\rho / \mu_1} \notag \\
+ r_2 (1 - e^{-\rho / \mu_1})  [ -\rho e^{-\rho/\mu_1}(1-e^{-\rho/\mu_2}) - \mu_1 e^{-\rho/\mu_1}(1-e^{-\rho/\mu_2}) 
-\frac{\rho \mu_1}{\mu_1 + \mu_2} e^{-\rho(\mu_1 + \mu_2)/(\mu_1 \mu_2)} \notag \\
- \frac{\mu_1^2 \mu_2}{(\mu_1 + \mu_2)^2} e^{-\rho(\mu_1 + \mu_2)/(\mu_1 \mu_2)} +  \frac{\mu_1^2 \mu_2}{(\mu_1 + \mu_2)^2}  +  \frac{\mu_1^2}{(\mu_1 + \mu_2)}(1 - e^{-\rho (\mu_1 + \mu_2)/(\mu_1 \mu_2)})    ] \notag \\
+ r_2 \mu_3 (1 - e^{-\rho/\mu_1})\left(  \frac{\mu_1}{(\mu_1 + \mu_2)} (1 - e^{-\rho(\mu_1 + \mu_2)/(\mu_1 \mu_2)})   - e^{-\rho/\mu_1}(1 - e^{-\rho/\mu_2}) \right). 
\end{align}
\end{prop}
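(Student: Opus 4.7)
The plan is to compute $\mathbf{E}[V_\mathrm{primary}[t]]$ by taking expectation term-by-term in the given expression, using that $R[t]=1$ iff $X[t-1]>\rho$ and $R[t+1]=1$ iff $X[t]>\rho$, together with the mutual independence of $X[t-1], X[t], X'[t], Y[t]$. A first simplification is that the event inside the second indicator, $(1-r_2)X'[t] + (1-r_2)Y[t] < (1-R[t+1])(1-r_2)X[t] + (1-r_2)Y[t]$, reduces to $X'[t] < (1-R[t+1])X[t]$ after cancellation; when $R[t+1]=1$ this event is impossible (since $X'[t]>0$ a.s.), so only the sub-case $R[t+1]=0$ contributes. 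Hence the indicator becomes $\mathbf{1}_{R[t]=0}\mathbf{1}_{R[t+1]=0}\mathbf{1}_{X'[t]<X[t]}$, i.e.\ $\mathbf{1}_{X[t-1]<\rho}\mathbf{1}_{X[t]<\rho}\mathbf{1}_{X'[t]<X[t]}$.

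With that observation, I would split $\mathbf{E}[V_\mathrm{primary}[t]]$ into five additive contributions: (a) $r_1\,\mathbf{E}[\mathbf{1}_{X[t-1]>\rho}X[t-1]]$; (b) $r_2\,\mathbf{E}[\mathbf{1}_{X[t-1]>\rho}\mathbf{1}_{X[t]<\rho}X[t]]$; (c) $r_2\,\mathbf{E}[\mathbf{1}_{X[t-1]>\rho}Y[t]]$; (d) $r_2\,\mathbf{E}[\mathbf{1}_{X[t-1]<\rho}\mathbf{1}_{X[t]<\rho}\mathbf{1}_{X'[t]<X[t]}X[t]]$; and (e) $r_2\,\mathbf{E}[\mathbf{1}_{X[t-1]<\rho}\mathbf{1}_{X[t]<\rho}\mathbf{1}_{X'[t]<X[t]}Y[t]]$. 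Terms (a)--(c) are immediate: by independence they factor into one-dimensional exponential integrals of the form $\int_\rho^\infty (x/\mu_1)e^{-x/\mu_1}\,dx = (\rho+\mu_1)e^{-\rho/\mu_1}$ and $\int_0^\rho (x/\mu_1)e^{-x/\mu_1}\,dx = \mu_1 - (\rho+\mu_1)e^{-\rho/\mu_1}$, producing exactly the first three lines of the claimed formula.

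For (d) and (e) I would use the independence of $X[t-1]$ from $(X[t],X'[t])$ to pull out the factor $(1-e^{-\rho/\mu_1})$, and then condition on $X'[t]$. Conditional on $X'[t]=u$, one needs $\mathbf{E}[\mathbf{1}_{u<X[t]<\rho}X[t]] = \mathbf{1}_{u<\rho}\int_u^\rho (x/\mu_1)e^{-x/\mu_1}\,dx$ for (d) and $\mathbf{E}[\mathbf{1}_{u<X[t]<\rho}] = \mathbf{1}_{u<\rho}(e^{-u/\mu_1}-e^{-\rho/\mu_1})$ for (e). Integrating these against the density of $X'[t]\sim\exp(1/\mu_2)$ over $[0,\rho]$ yields combined exponentials $e^{-x(\mu_1+\mu_2)/(\mu_1\mu_2)}$, which integrate elementarily and produce the $\frac{\mu_1}{\mu_1+\mu_2}$ and $\frac{\mu_1^2\mu_2}{(\mu_1+\mu_2)^2}$ coefficients in the claimed formula. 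Crucially, these are exactly the same integrals already carried out in the proof of Proposition~\ref{prop:validatorrewardpolicy} (the two blocks in Appendix~\ref{apx:vppolicy} ending at Equations analogous to~\eqref{eq:arbit1}--\eqref{eq:arbit2}), with the only change being the replacement of the validator's coefficient $(1-r_2)$ by the builder's coefficient $r_2$.

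The main obstacle is purely bookkeeping: keeping the many exponential-polynomial antiderivatives straight, verifying that the signs and the $\frac{\mu_1^2\mu_2}{(\mu_1+\mu_2)^2}$ boundary terms match the stated expression, and confirming that the $R[t+1]=1$ sub-case really contributes zero (so that no $\mathbf{1}_{X[t]>\rho}$ terms appear in (d),(e)). No new techniques are needed beyond the elementary exponential integrals already established in Appendix~\ref{apx:vppolicy}; indeed once (a)--(c) are written down, terms (d) and (e) can essentially be read off by renormalizing the corresponding validator computations by the factor $r_2/(1-r_2)$. Summing all five contributions then reproduces the four lines of the right-hand side of the proposition exactly.
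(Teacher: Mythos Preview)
Your proposal is correct and follows essentially the same approach as the paper: the paper decomposes $\mathbf{E}[V_\mathrm{primary}[t]]$ into exactly your five terms (a)--(e), makes the same reduction of the second indicator to $\mathbf{1}_{X[t-1]<\rho}\mathbf{1}_{X[t]<\rho}\mathbf{1}_{X'[t]<X[t]}$, and evaluates (d) and (e) by conditioning on $X'[t]$ just as you describe. Your observation that (d) and (e) are the $r_2/(1-r_2)$ rescalings of the corresponding validator computations in Appendix~\ref{apx:vppolicy} is a valid shortcut the paper does not explicitly invoke (it recomputes the integrals), but the underlying method is identical.
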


\begin{proof}
\begin{align}
\mathbf{E}[\mathbf{1}_{R[t] = 1} r_1 X[t-1]] = r_1 \mathbf{E}[\mathbf{1}_{X[t-1] > \rho} X[t-1]] = r_1 \int_\rho^\infty x/\mu_1 e^{-x/\mu_1} dx = r_1 [ -xe^{-x/\mu_1} - \mu_1 e^{-x/\mu_1}]_\rho ^\infty \notag \\
= r_1 (\rho e^{-\rho/\mu_1} + \mu_1 e^{-\rho/\mu_1}) \label{eq:vprim1} \\
\mathbf{E}[\mathbf{1}_{R[t] = 1}(1 - R[t+1])r_2 X[t]] = \mathbf{E}[\mathbf{1}_{X[t-1] > \rho} \mathbf{1}_{X[t] < \rho} r_2 X[t]] = r_2 e^{-\rho / \mu_1} \int_0 ^\rho x e^{-x/\mu_1} / \mu_1 dx \notag \\
= r_2 e^{-\rho / \mu_1} [-xe^{-x/\mu_1} - \mu_1 e^{-x/\mu_1}]_0^\rho =  r_2 e^{-\rho / \mu_1} (\mu_1 -\rho e^{-\rho/\mu_1} - \mu_1 e^{-\rho / \mu_1}) \\
\mathbf{E}[\mathbf{1}_{R[t]=1} r_2 Y[t]] = r_2 \mu_3 P(X[t-1] > \rho) = r_2 \mu_3 e^{-\rho / \mu_1}  \\
\mathbf{E} [ \mathbf{1}_{R[t]=0} \mathbf{1}_{(1-r_2) X'[t] + (1-r_2) Y[t] < (1-R[t+1])(1 - r_2)X[t] + (1-r_2)Y[t]} (1-R[t+1]) r_2 X[t] ] \notag 
\end{align}
\begin{align}
= \mathbf{E} [ \mathbf{1}_{R[t]=0} \mathbf{1}_{X'[t]  < (1-R[t+1])X[t] } (1-R[t+1]) r_2 X[t] ] 
= \mathbf{E} [ \mathbf{1}_{R[t]=0} \mathbf{1}_{R[t+1]=0} \mathbf{1}_{X'[t]  < X[t] }  r_2 X[t] ]  \notag \\
= r_2 (1 - e^{-\rho / \mu_1}) \mathbf{E} [ \mathbf{1}_{R[t+1]=0} \mathbf{1}_{X'[t]  < X[t] }  X[t] ] 
= r_2 (1 - e^{-\rho / \mu_1}) \mathbf{E} [ \mathbf{E} [ \mathbf{1}_{R[t+1]=0} \mathbf{1}_{X'[t]  < X[t] }  X[t] ] | X'[t] ] \notag \\
= r_2 (1 - e^{-\rho / \mu_1}) \mathbf{E} [ \mathbf{E} [ \mathbf{1}_{X[t] < \rho} \mathbf{1}_{X'[t]  < X[t] }  X[t] ] | X'[t] ]  \notag \\
= r_2 (1 - e^{-\rho / \mu_1}) \mathbf{E} [ \mathbf{E} [\mathbf{1}_{X'[t] < \rho} \mathbf{1}_{X'[t] < X[t] < \rho} X[t] ] | X'[t] ]  \notag 
\end{align}
\begin{align}
= r_2 (1 - e^{-\rho / \mu_1}) \mathbf{E} [ \mathbf{1}_{X'[t] < \rho} \mathbf{E} [ \mathbf{1}_{X'[t] < X[t] < \rho} X[t] ] | X'[t] ]  \notag \\ 
= r_2 (1 - e^{-\rho / \mu_1}) \mathbf{E} [ \mathbf{1}_{X'[t] < \rho} \int_{X'[t]}^\rho x /\mu_1e^{-x/\mu_1} dx ] \notag  \\ 
= r_2 (1 - e^{-\rho / \mu_1}) \mathbf{E} [ \mathbf{1}_{X'[t] < \rho} [ -x e^{-x/\mu_1} - \mu_1 e^{-x/\mu_1}]^\rho_{X'[t]}  ] \notag \\ 
= r_2 (1 - e^{-\rho / \mu_1}) \mathbf{E} [ \mathbf{1}_{X'[t] < \rho} [ -\rho e^{-\rho/\mu_1} - \mu_1 e^{-\rho/\mu_1} + X'[t] e^{-X'[t]/\mu_1} + \mu_1 e^{-X'[t]/\mu_1}]  ] \notag \\ 
= r_2 (1 - e^{-\rho / \mu_1})  [ -\rho e^{-\rho/\mu_1}(1-e^{-\rho/\mu_2}) - \mu_1 e^{-\rho/\mu_1}(1-e^{-\rho/\mu_2}) \notag \\
+ \int_0^\rho  x/\mu_2 e^{-x/\mu_1} e^{-x/\mu_2} dx + \int_0^\rho \mu_1/\mu_2 e^{-x/\mu_1}e^{-x/\mu_2} dx   ] \notag 
\end{align}
\begin{align}
= r_2 (1 - e^{-\rho / \mu_1})  [ -\rho e^{-\rho/\mu_1}(1-e^{-\rho/\mu_2}) - \mu_1 e^{-\rho/\mu_1}(1-e^{-\rho/\mu_2}) \notag \\
+ \int_0^\rho  x/\mu_2 e^{-x(\mu_1 + \mu_2)/(\mu_1 \mu_2)} dx + \int_0^\rho \mu_1/\mu_2 e^{-x(\mu_1 + \mu_2)/(\mu_1 \mu_2)}  dx   ] \notag \\ 
= r_2 (1 - e^{-\rho / \mu_1})  [ -\rho e^{-\rho/\mu_1}(1-e^{-\rho/\mu_2}) - \mu_1 e^{-\rho/\mu_1}(1-e^{-\rho/\mu_2}) \notag \\
+  \left[-\frac{x \mu_1}{\mu_1 + \mu_2} e^{-x(\mu_1 + \mu_2)/(\mu_1 \mu_2)} - \frac{\mu_1^2 \mu_2}{(\mu_1 + \mu_2)^2} e^{-x(\mu_1 + \mu_2)/(\mu_1 \mu_2)} \right]_0^\rho +  \frac{\mu_1^2}{(\mu_1 + \mu_2)}(1 - e^{-\rho (\mu_1 + \mu_2)/(\mu_1 \mu_2)})    ] \notag \\ 
= r_2 (1 - e^{-\rho / \mu_1})  [ -\rho e^{-\rho/\mu_1}(1-e^{-\rho/\mu_2}) - \mu_1 e^{-\rho/\mu_1}(1-e^{-\rho/\mu_2}) 
 -\frac{\rho \mu_1}{\mu_1 + \mu_2} e^{-\rho(\mu_1 + \mu_2)/(\mu_1 \mu_2)} \notag \\ - \frac{\mu_1^2 \mu_2}{(\mu_1 + \mu_2)^2} e^{-\rho(\mu_1 + \mu_2)/(\mu_1 \mu_2)} +  \frac{\mu_1^2 \mu_2}{(\mu_1 + \mu_2)^2}  +  \frac{\mu_1^2}{(\mu_1 + \mu_2)}(1 - e^{-\rho (\mu_1 + \mu_2)/(\mu_1 \mu_2)})    ] \\ 
\mathbf{E} [ \mathbf{1}_{R[t]=0} \mathbf{1}_{(1-r_2) X'[t] + (1-r_2) Y[t] < (1-R[t+1])(1 - r_2)X[t] + (1-r_2)Y[t]}  r_2 Y[t] ] \notag \\
= \mathbf{E} [ \mathbf{1}_{R[t]=0} \mathbf{1}_{R[t+1] = 0} \mathbf{1}_{ X'[t]  < X[t] }  r_2 Y[t] ] 
= \mathbf{E} [ \mathbf{1}_{X[t-1] < \rho} \mathbf{1}_{X[t] < \rho} \mathbf{1}_{ X'[t]  < X[t] }  r_2 Y[t] ]  \notag \\
= r_2 \mu_3 (1 - e^{-\rho/\mu_1}) \mathbf{E} [ \mathbf{1}_{X[t] < \rho}  \mathbf{1}_{ X'[t]  < X[t] }   ] 
= r_2 \mu_3 (1 - e^{-\rho/\mu_1}) \mathbf{E} [ \mathbf{E}[ \mathbf{1}_{X[t] < \rho}  \mathbf{1}_{ X'[t]  < X[t] }  | X'[t] ] ] \notag 
\end{align}
\begin{align}
= r_2 \mu_3 (1 - e^{-\rho/\mu_1}) \mathbf{E} [ \mathbf{E}[  \mathbf{1}_{ X'[t]  < X[t] < \rho } \mathbf{1}_{X'[t] < \rho} | X'[t] ] ] \notag \\
= r_2 \mu_3 (1 - e^{-\rho/\mu_1}) \mathbf{E} [ \mathbf{1}_{X'[t] < \rho} \mathbf{E}[  \mathbf{1}_{ X'[t]  < X[t] < \rho }  | X'[t] ] ] \notag \\
= r_2 \mu_3 (1 - e^{-\rho/\mu_1}) \mathbf{E} [ \mathbf{1}_{X'[t] < \rho} (e^{-X'[t]/\mu_1} - e^{-\rho/\mu_1})  ] \notag 
\end{align}
\begin{align}
= r_2 \mu_3 (1 - e^{-\rho/\mu_1})\left( \int_0^\rho 1/\mu_2 e^{-x/\mu_1} e^{-x/\mu_2} dx  - e^{-\rho/\mu_1}(1 - e^{-\rho/\mu_2}) \right) \notag \\
= r_2 \mu_3 (1 - e^{-\rho/\mu_1})\left( \int_0^\rho (\mu_1/(\mu_1 + \mu_2)) (\mu_1 + \mu_2)/(\mu_1 \mu_2) e^{-x(\mu_1 + \mu_2)/(\mu_1 \mu_2)}  dx  - e^{-\rho/\mu_1}(1 - e^{-\rho/\mu_2}) \right) \notag \\
= r_2 \mu_3 (1 - e^{-\rho/\mu_1})\left(  \frac{\mu_1}{(\mu_1 + \mu_2)} (1 - e^{-\rho(\mu_1 + \mu_2)/(\mu_1 \mu_2)})   - e^{-\rho/\mu_1}(1 - e^{-\rho/\mu_2}) \right) \label{eq:vprimlast}
\end{align}  
\end{proof}

\section{Secondary builder reward}
\label{apx:vsecondary}
\begin{prop}[Secondary builder reward]
\label{prop:secondarybuilder}
\begin{align}
\mathbf{E}[V_\mathrm{secondary}[t]] = r_2 (1 - e^{-\rho/\mu_1}) (e^{-\rho / \mu_1} \mu_2 + \mu_2 + \rho \mu_1    /(\mu_1 + \mu_2) e^{- \rho (\mu_1 + \mu_2)/(\mu_1 \mu_2)} \notag \\
+  \frac{\mu_1^2 \mu_2}{(\mu_1 + \mu_2)^2} e^{-\rho(\mu_1 + \mu_2)/(\mu_1 \mu_2)}
- \frac{\mu_1^2 \mu_2}{(\mu_1 + \mu_2)^2}
 - \rho e^{- \rho (\mu_1 + \mu_2) /(\mu_1 \mu_2)} - \mu_2 e^{-\rho(\mu_1 + \mu_2)/(\mu_1 \mu_2)} ) \notag \\
 + r_2 \mu_3 (1 - e^{-\rho/\mu_1}) ( e^{-\rho/\mu_1} +  1 + \mu_1/(\mu_1 + \mu_2) e^{-\rho (\mu_1 + \mu_2)/(\mu_1 \mu_2)} - \mu_1/(\mu_1 + \mu_2)   -  e^{-\rho/\mu_1} e^{-\rho/\mu_2}  ) .
\end{align}
\end{prop}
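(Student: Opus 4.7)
The plan is to mirror the calculations in Propositions~\ref{prop:validatorrewardpolicy} and~\ref{prop:primarybuilder}: simplify the nested indicators using $R[t] = \mathbf{1}_{X[t-1]>\rho}$ and $R[t+1] = \mathbf{1}_{X[t]>\rho}$, factor out independent pieces, split into cases on $R[t+1]$, and evaluate the resulting exponential integrals.

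First I would cancel the common $(1-r_2)Y[t]$ from both sides of the inner indicator, reducing the event to $\{X'[t] > (1-R[t+1])X[t]\}$. Since $\mathbf{1}_{R[t]=0} = \mathbf{1}_{X[t-1]<\rho}$ is independent of $X[t], X'[t], Y[t]$, it factors out as $P(X[t-1]<\rho) = 1-e^{-\rho/\mu_1}$, giving the leading multiplier present in the claimed expression.

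Next I would split the remaining expectation $\mathbf{E}[\mathbf{1}_{X'[t]>(1-R[t+1])X[t]}(X'[t]+Y[t])]$ according to whether $X[t] > \rho$ (so $R[t+1]=1$) or $X[t] < \rho$ (so $R[t+1]=0$). On $\{X[t]>\rho\}$ the condition is trivially satisfied, yielding $e^{-\rho/\mu_1}(\mu_2 + \mu_3)$ and accounting for the $e^{-\rho/\mu_1}\mu_2$ and $\mu_3 e^{-\rho/\mu_1}$ terms of the formula. On $\{X[t]<\rho\}$, the $Y[t]$ contribution reduces to $\mu_3 P(X[t]<\rho, X'[t]>X[t])$, which by conditioning on $X[t]$ and integrating $\frac{1}{\mu_1}e^{-x(\mu_1+\mu_2)/(\mu_1\mu_2)}$ over $(0,\rho)$ gives $\frac{\mu_3\mu_1}{\mu_1+\mu_2}(1 - e^{-\rho(\mu_1+\mu_2)/(\mu_1\mu_2)})$, matching the last parenthesized block of the statement.

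The main obstacle, and the only computation requiring real bookkeeping, is evaluating $\mathbf{E}[\mathbf{1}_{X[t]<\rho}\mathbf{1}_{X'[t]>X[t]}X'[t]]$. I would condition on $X'[t]=y$ and use $P(X[t]<\min(\rho,y)) = 1 - e^{-\min(\rho,y)/\mu_1}$, then split the outer $y$-integral at $y = \rho$. The piece on $(0,\rho)$, after integration by parts on $y\, e^{-y(\mu_1+\mu_2)/(\mu_1\mu_2)}$, produces the $\rho\mu_1/(\mu_1+\mu_2)\cdot e^{-\rho(\mu_1+\mu_2)/(\mu_1\mu_2)}$ and $\mu_1^2\mu_2/(\mu_1+\mu_2)^2$ terms of the claimed identity; the tail on $(\rho,\infty)$ supplies the $\mu_2$ term together with the $-\rho\, e^{-\rho(\mu_1+\mu_2)/(\mu_1\mu_2)}$ and $-\mu_2\, e^{-\rho(\mu_1+\mu_2)/(\mu_1\mu_2)}$ contributions after using $e^{-\rho/\mu_1}e^{-\rho/\mu_2} = e^{-\rho(\mu_1+\mu_2)/(\mu_1\mu_2)}$. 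These integrals are structurally identical to those carried out in Appendix~\ref{apx:vprimary}; indeed, $V_\mathrm{secondary}[t]$ is the complementary branch of the $R[t]=0$ case for the primary builder, so I would reuse the intermediate identities there and then multiply through by $r_2(1-e^{-\rho/\mu_1})$ to obtain the stated formula.
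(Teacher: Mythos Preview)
Your proposal is correct and follows essentially the same decomposition as the paper: factor out $\mathbf{1}_{R[t]=0}$, split on $R[t+1]$, and for the nontrivial branch condition on $X'[t]$ using $P(X[t]<\min(\rho,X'[t]))=1-e^{-\min(\rho,X'[t])/\mu_1}$ with the integral split at $\rho$. One minor slip: your $Y[t]$ integral $\int_0^\rho \frac{1}{\mu_1}e^{-x(\mu_1+\mu_2)/(\mu_1\mu_2)}\,dx$ evaluates to $\frac{\mu_2}{\mu_1+\mu_2}(1-e^{-\rho(\mu_1+\mu_2)/(\mu_1\mu_2)})$, not $\frac{\mu_1}{\mu_1+\mu_2}(\cdots)$, though this is algebraically equivalent to the paper's form $1-\frac{\mu_1}{\mu_1+\mu_2}+\frac{\mu_1}{\mu_1+\mu_2}e^{-\rho(\mu_1+\mu_2)/(\mu_1\mu_2)}-e^{-\rho/\mu_1}e^{-\rho/\mu_2}$ (the paper conditions on $X'[t]$ rather than $X[t]$ for this piece, producing the latter display directly).
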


\begin{proof}
\begin{align}
\mathbf{E}[\mathbf{1}_{R[t]=0} \mathbf{1}_{(1-r_2) X'[t] + (1-r_2) Y[t] > (1-R[t+1])(1 - r_2)X[t] + (1-r_2)Y[t]} r_2 X'[t] ] \notag \\
= \mathbf{E} [ \mathbf{1}_{R[t]=0} \mathbf{1}_{ X'[t] > (1-R[t+1])X[t]} r_2 X'[t] ] 
= 
\mathbf{E} [ \mathbf{1}_{R[t]=0} \mathbf{1}_{ X'[t] > (1-R[t+1])X[t]} r_2 X'[t] ] \notag \\ 
= 
r_2 (1 - e^{-\rho/\mu_1}) \mathbf{E} [  \mathbf{1}_{ X'[t] > (1-R[t+1])X[t]}  X'[t] ] \notag \\ 
= 
r_2 (1 - e^{-\rho/\mu_1}) \mathbf{E} [  (\mathbf{1}_{X[t] > \rho} + \mathbf{1}_{X[t] < \rho,  X'[t] > X[t]} ) X'[t] ] \notag 
\end{align}
\begin{align}
= 
r_2 (1 - e^{-\rho/\mu_1}) (e^{-\rho / \mu_1} \mu_2 + \mathbf{E}[\mathbf{E}[ \mathbf{1}_{X[t] < \rho} \mathbf{1}_{X[t] < X'[t]} X'[t] | X'[t]] ]) \notag \\
= 
r_2 (1 - e^{-\rho/\mu_1}) (e^{-\rho / \mu_1} \mu_2 + \mathbf{E}[\mathbf{E}[ \mathbf{1}_{X[t] < \min(\rho, X'[t])}  X'[t] | X'[t]] ]) \notag \\
 = 
r_2 (1 - e^{-\rho/\mu_1}) (e^{-\rho / \mu_1} \mu_2 + \mathbf{E}[X'[t](1 - e^{-\min(\rho, X'[t])/\mu_1}) ]) \notag \\ 
 = 
r_2 (1 - e^{-\rho/\mu_1}) (e^{-\rho / \mu_1} \mu_2 + \mu_2 -  \mathbf{E}[X'[t] e^{-\min(\rho, X'[t])/\mu_1} ]) \notag \\ 
 = 
r_2 (1 - e^{-\rho/\mu_1}) (e^{-\rho / \mu_1} \mu_2 + \mu_2 - \int_0^\infty  x e^{-\min(\rho, x)/\mu_1} / \mu_2 e^{-x/\mu_2} dx ) \notag 
\end{align}
\begin{align}
 = 
r_2 (1 - e^{-\rho/\mu_1}) (e^{-\rho / \mu_1} \mu_2 + \mu_2 - \int_0^\rho  x  e^{- x/\mu_1} / \mu_2 e^{-x/\mu_2} dx - \int_\rho^\infty     x e^{- \rho /\mu_1} / \mu_2 e^{-x/\mu_2} dx ) \notag \\ 
 = 
r_2 (1 - e^{-\rho/\mu_1}) (e^{-\rho / \mu_1} \mu_2 + \mu_2 - \int_0^\rho    x/\mu_2 e^{- x(\mu_1 + \mu_2)/(\mu_1 \mu_2)} dx - e^{- \rho /\mu_1} / \mu_2 \int_\rho^\infty      x e^{-x/\mu_2} dx ) \notag \\ 
= 
r_2 (1 - e^{-\rho/\mu_1}) (e^{-\rho / \mu_1} \mu_2 + \mu_2 - [ - x \mu_1    /(\mu_1 + \mu_2) e^{- x(\mu_1 + \mu_2)/(\mu_1 \mu_2)} \notag \\ - \mu_1^2 \mu_2 / (\mu_1 + \mu_2)^2 e^{- x(\mu_1 + \mu_2)/(\mu_1 \mu_2)} ]_0^\rho 
 - e^{- \rho /\mu_1} / \mu_2 [ - x \mu_2 e^{-x/\mu_2} - \mu_2^2 e^{-x/\mu_2}]_\rho^\infty ) \notag \\ 
 = 
r_2 (1 - e^{-\rho/\mu_1}) (e^{-\rho / \mu_1} \mu_2 + \mu_2 + \rho \mu_1    /(\mu_1 + \mu_2) e^{- \rho (\mu_1 + \mu_2)/(\mu_1 \mu_2)} +  \frac{\mu_1^2 \mu_2}{(\mu_1 + \mu_2)^2} e^{-\rho(\mu_1 + \mu_2)/(\mu_1 \mu_2)} \notag \\
- \frac{\mu_1^2 \mu_2}{(\mu_1 + \mu_2)^2}
 - \rho e^{- \rho (\mu_1 + \mu_2) /(\mu_1 \mu_2)} - \mu_2 e^{-\rho(\mu_1 + \mu_2)/(\mu_1 \mu_2)} ) \\
\mathbf{E}[ \mathbf{1}_{R[t]=0} \mathbf{1}_{(1-r_2) X'[t] + (1-r_2) Y[t] > (1-R[t+1])(1 - r_2)X[t] + (1-r_2)Y[t]} r_2 Y[t]] \notag \\
= 
\mathbf{E}[ \mathbf{1}_{R[t]=0} \mathbf{1}_{ X'[t] > (1-R[t+1])X[t]} r_2 Y[t]] 
= 
r_2 \mu_3 (1 - e^{-\rho/\mu_1}) \mathbf{E}[  \mathbf{1}_{ X'[t] > (1-R[t+1])X[t]}] \notag \\ 
= 
r_2 \mu_3 (1 - e^{-\rho/\mu_1}) \mathbf{E}[ \mathbf{E}[ \mathbf{1}_{ X'[t] > (1-R[t+1])X[t]} | X'[t]] ] \notag \\
= 
r_2 \mu_3 (1 - e^{-\rho/\mu_1}) \mathbf{E}[ \mathbf{E}[ \mathbf{1}_{ X[t] > \rho} + \mathbf{1}_{X[t] < \rho, X'[t] > X[t]} | X'[t]] ] \notag \\
= 
r_2 \mu_3 (1 - e^{-\rho/\mu_1}) \mathbf{E}[ e^{-\rho/\mu_1} +  \mathbf{E}[ \mathbf{1}_{X[t] < \min( \rho, X'[t])} | X'[t]] ] \notag \\
= 
r_2 \mu_3 (1 - e^{-\rho/\mu_1}) \mathbf{E}[ e^{-\rho/\mu_1} +  1 - e^{-\min( \rho, X'[t])/\mu_1}  ] \notag \\
= 
r_2 \mu_3 (1 - e^{-\rho/\mu_1}) ( e^{-\rho/\mu_1} +  1 - \int_0^\infty e^{-\min( \rho, x)/\mu_1}/\mu_2 e^{-x/\mu_2} dx  ) \notag \\
= 
r_2 \mu_3 (1 - e^{-\rho/\mu_1}) ( e^{-\rho/\mu_1} +  1 - \int_0^\rho e^{- x/\mu_1}/\mu_2 e^{-x/\mu_2} dx  - \int_\rho^\infty e^{-\rho/\mu_1}/\mu_2 e^{-x/\mu_2} dx ) \notag \\
= 
r_2 \mu_3 (1 - e^{-\rho/\mu_1}) ( e^{-\rho/\mu_1} +  1 - [-\mu_1/(\mu_1 + \mu_2) e^{-x(\mu_1 + \mu_2)/(\mu_1 \mu_2)}]_0^\rho   -  e^{-\rho/\mu_1} e^{-\rho/\mu_2} )  \notag \\
= 
r_2 \mu_3 (1 - e^{-\rho/\mu_1}) ( e^{-\rho/\mu_1} +  1 + \mu_1/(\mu_1 + \mu_2) e^{-\rho (\mu_1 + \mu_2)/(\mu_1 \mu_2)} - \mu_1/(\mu_1 + \mu_2)   -  e^{-\rho/\mu_1} e^{-\rho/\mu_2}  )  . \label{eq:vsec1}
\end{align}
\end{proof}

\section{Proof of Lemma~\ref{lem:validatorpolicymucondi}}
\label{apx:validatorpolicymucondi}
\begin{proof}
Setting $r_1 = 0$ and $\mu_1 = 1 - \mu_2$, we simplify Equations~\eqref{eq:validatorrewpolicy} and~\eqref{eq:validrewdeviation} to get\footnote{All simplifications in this section were performed using Wolfram Mathematica [XXX].} 
\begin{align}
\mathbf{E}[V_p^\mathrm{policy}[t]] - \mathbf{E}[V_p^\mathrm{default}[t]] = e^{\frac{(1+\mu_2)\rho}{(-1+\mu_2)\mu_2}}\left( e^{\frac{\rho}{\mu_2}} \mu_2 (-1+r_2) - \mu_2^2 (-1 + r_2) \right. \notag \\
\left. + e^{\frac{\rho}{\mu_2(1-\mu_2))}}(1-\mu_2 +\mu_2^2 (-1 + r_2) + \rho) \right).
\end{align}
Now, 
\begin{align}
e^{\frac{\rho}{\mu_2}} \mu_2 (-1+r_2) - \mu_2^2 (-1 + r_2) 
 + e^{\frac{\rho}{\mu_2(1-\mu_2))}}(1-\mu_2 +\mu_2^2 (-1 + r_2) + \rho) \notag \\
 > -e^{\frac{\rho}{\mu_2}}\mu_2 + e^{\frac{\rho}{\mu_2}} (1 - \mu_2 + \mu_2^2(-1 + r_2) + \rho) \notag \\
 > -e^{\frac{\rho}{\mu_2}} \mu_2 + e^{\frac{\rho}{\mu_2}} (1-\mu_2 - \mu_2 + \rho ) \notag \\
 > -e^{\frac{\rho}{\mu_2}} \mu_2 + e^{\frac{\rho}{\mu_2}} \rho \notag \\
> 0, 
\end{align}
where the last two inequalities hold as long as $\mu_2 < 1/2$ and $\rho > \mu_2$. 
\end{proof}

\section{Proof of Lemma~\ref{lem:fixedptlesszero}}
\label{apx:fixedptlesszero}
\begin{proof}
Simplifying and setting $r_1 = 0, \mu_2 = 1/2$, we have 
\begin{align}
\mathbf{E}[V_\mathrm{primary}[t]]\mu_2 - \mathbf{E}[V_\mathrm{secondary}[t]] \mu_1 = r_2( (-0.25 -0.5\mu_3 -0.5\rho) e^{-6\rho} + (1.5\mu_3 +  0.5 + 0.5\rho)e^{-4\rho} \notag \\
+ (-0.25-0.5\mu_3 -0.5\rho) e^{-2\rho})  \notag \\
< r_2( (1.5\mu_3 + 0.5 + 0.5\rho) e^{-4\rho } + (-0.25 -0.5\mu_3 -0.5\rho)e^{2\rho} ) \notag \\
= r_2( (-0.25 +0.5 e^{-2\rho} + \mu_3(-0.5 + 1.5 e^{-2\rho}) + \rho (-0.5 + 0.5 e^{-2\rho})) e^{-2\rho}) < 0,
\end{align}
as long as $0.5 e^{-2\rho}<0.25$, $1.5e^{-2\rho}<0.5$ and $e^{-2\rho} < 1$, or as long as $\rho > \ln(3)/2$, thus completing the proof. 
\end{proof}

\section{Proof of Theorem~\ref{thm:fixedpointlesshalf}}
\label{apx:fixedpointlesshalf}
\begin{proof}
For any $\mu_2 \in (0, 0.5), r_1 = 0$ we have 
\begin{align}
(\mathbf{E}[V_\mathrm{primary}[t]]\mu_2 - \mathbf{E}[V_\mathrm{secondary}[t]] \mu_1)/r_2 \notag \\
= -3\mu_2^2 + 2 \mu_2^3 + \mu_2 + e^{\frac{\rho}{-1+\mu_2}} (4 \mu_2^2 -2 \mu_2^3 - \mu_3 + \mu_2 (-2 + \mu_3 - \rho)) + e^{\frac{2\rho}{-1+\mu_2}} (-\mu_2^2 + \mu_3 + \mu_2) \notag \\
e^{\frac{\rho}{(-1+\mu_2) \mu_2}} (-2 \mu_2^2 -2 \mu_2^3 + \mu_2 (\mu_3 + \rho)) + e^{\frac{(1+\mu_2)\rho}{-1+\mu_2}\mu_2} (-2 \mu_2^2 -2\mu_2^3 -\mu_2 (\mu_3 + \rho)) \notag \\
> -3 \mu_2^2 + 2 \mu_2^3 + \mu_2 + e^{\frac{-\rho}{0.5}} (-2*0.5^3 -\mu_3 + 0.5 (-2 + \mu_3 - \rho)) + e^{\frac{-2\rho}{0.5}} (-0.25 + \mu_3) \notag \\
+ e^{\frac{-\rho}{0.25}} (-0.5 -0.25) + e^{\frac{-\rho}{0.25}} (-0.5 -0.25 -0.5 (\mu_3 + \rho)),
\end{align}
for $\rho > \mu_3 - 2$. 
For any $\mu_2 \in (0, 0.5)$ let $c := -3 \mu_2^2 + 2 \mu_2^3 + \mu_2 $. 
Note that $c > 0$. 
In the following we show that each of the four terms in the above summation (following $-3\mu_2^2 + 2 \mu_2^3 + \mu_2$ are greater than $-c/4$. 

First, we want 
\begin{align}
e^{\frac{-\rho}{0.5}} (-2 * 0.5^3 -\mu3 + 0.5 (-2 + \mu_3 -\rho)) > \frac{-c}{4}, 
\end{align}
or equivalently 
\begin{align}
e^{\frac{-\rho}{0.5}} (1.25 + 0.5 \mu_3 + 0.5 \rho ) < \frac{c}{4}.  
\end{align}
If $\rho > 0.5 (1.25 + 0.5\mu_3) - 0.5 \ln(\frac{ce}{8})$ we have 
\begin{align}
e^{\frac{-\rho}{0.5}} (1.25 + 0.5 \mu_3 ) < e^{-(1.25 + 0.5 \mu_3) + \ln(ce)/8} (1.25 + 0.5 \mu_3) < \frac{1}{e} \frac{ce}{8} = \frac{c}{8}.
\end{align}
Since $1.25 + 0.5 \mu_3 - \ln(ce/8) > 1$ and $e^{-x} x $ is a decreasing function of $x$ for $x > 1$, we have 
\begin{align}
e^{\frac{-\rho}{0.5}} (0.5 \rho) < e^{-(1.25 + 0.5 \mu_3 ) + \ln(ce/8)} (0.5 (0.5 (1.25 + 0.5 \mu_3 ) - 0.5 \ln(ce/8))) \notag \\
< e^{-(1.25 + 0.5\mu_3) + \ln(ce/8)} (0.25 (1.25 + 0.5 \mu_3) - 0.25 \ln(ce/8)) \notag \\
< e^{-(1.25 + 0.5 \mu_3) + \ln(ce/8)} (0.25 (1.25 + 0.5 \mu_3)) \notag \\
< e^{\ln(ce/8)} 0.25 / e = \frac{ce 0.25 }{8e} < \frac{c}{8}.
\end{align}
Therefore, $e^{\frac{-\rho}{0.5}} (1.25 + 0.5\mu_3 + 0.5\rho ) < c/4$. 

Next, we want $e^{\frac{-2\rho}{0.5}} (-0.25 + \mu_3) > -c/4$, or equivalently $e^{\frac{-2\rho}{0.5}} (0.25 - \mu_3) < c/4$. 
If $\rho > 0.25 (0.25 - \mu_3) - 0.25\ln(ce/4)$ we have 
\begin{align}
e^{\frac{-2\rho}{0.5}} (0.25 - \mu_3) < e^{-(0.25-\mu_3) + \ln(ce/4)} (0.25 - \mu_3) \notag \\
< e^{\ln(ce/4)} / e = \frac{ce}{4e} = \frac{c}{4}. 
\end{align}

Next, we want $e^{\frac{-\rho}{0.25}} (-0.5 -0.25) > -c/4$, or equivalently $e^{\frac{-\rho}{0.25}} (0.75) < c/4$. 
If $\rho > -0.25 \ln(c/3)$ we get what we want. 

Next, we want $e^{\frac{-\rho}{0.25}} (-0.5 -0.25 -0.5 (\mu_3 + \rho)) > -c/4$ or equivalently $e^{\frac{-\rho}{0.25}} (0.75 + 0.5 (\mu_3 + \rho)) < c/4$.
If $\rho > 0.25 (0.75 + 0.5 \mu_3) - 0.25\ln(ce/8)$, we have 
\begin{align}
e^{\frac{-\rho}{0.25}} (0.75 + 0.5 \mu_3) &< e^{-(0.75 + 0.5\mu_3) + \ln(ce/8)} (0.75 + 0.5 \mu_3) \notag \\
&< e^{\ln(ce/8)}/e = \frac{c}{8}.
\end{align}

Note that $c$ is at most 0.1 (occuring at the maximum of the polynomial $-3x^2 + 2x^3 + x$ for $x\in (0, 0.5)$). 
Therefore, $0.75 + 0.5 \mu_3 - \ln(ce/8) > 1$.
We have 
\begin{align}
e^{\frac{-\rho}{0.25}} 0.5 \rho &< e^{-(0.75 + 0.5 \mu_3) + \ln(ce/8)} 0.5 (0.25 (0.75 + 0.5 \mu_3 ) - 0.25 \ln(ce/8)) \notag \\
&< e^{-(0.75 + 0.5\mu_3) + \ln(ce/8)} 0.5 (0.25 (0.75 + 0.5 \mu_3)) \notag \\
&< e^{\ln(ce/8)} 0.5 * 0.25/e = ce 0.5 * 0.25 / (8e) < c/8.  
\end{align}

Hence for any $0 <\mu_2 < 0.5$ and $c = -3 \mu_2^2 + 2 \mu_2^3 + \mu_2$, if $\rho > \max(\mu_3 - 2, 0.5 (1.25 + 0.5 \mu_3) - 0.5 \ln(ce/8), 0.25 (0.25 - \mu_3) - 0.25 \ln(ce/4), - 0.25 \ln(c/3), 0.25 (0.75 +0.5 \mu_3 ) - 0.25 \ln(ce/8))$ we have 
\begin{align}
-3\mu_2^2 + 2 \mu_2^3 + \mu_2 + e^{\frac{\rho}{-1+\mu_2}} (4 \mu_2^2 -2 \mu_2^3 - \mu_3 + \mu_2 (-2 + \mu_3 - \rho)) + e^{\frac{2\rho}{-1+\mu_2}} (-\mu_2^2 + \mu_3 + \mu_2) \notag \\
e^{\frac{\rho}{(-1+\mu_2) \mu_2}} (-2 \mu_2^2 -2 \mu_2^3 + \mu_2 (\mu_3 + \rho)) + e^{\frac{(1+\mu_2)\rho}{-1+\mu_2}\mu_2} (-2 \mu_2^2 -2\mu_2^3 -\mu_2 (\mu_3 + \rho)) > 0.
\end{align}
From Lemma~\ref{lem:fixedptlesszero} and by the continuity of the function above in $\mu_2$, there must exist a fixed point between $\mu_2$ and 0.5. 
\end{proof}

\end{document}